\PassOptionsToPackage{table}{xcolor}
\documentclass[%
 reprint,
superscriptaddress,
nofootinbib,
 amsmath,amssymb,
 aps,
pra,
]{revtex4-2}

\usepackage{tikz} 
\usepackage{amsmath,amsfonts,amsthm, mathtools}
\usepackage{float}
\usepackage{dsfont}
\usepackage{csquotes}
\usepackage{amssymb}
\usepackage{braket}
\usepackage[table]{xcolor}
\usepackage{verbatim}
\usepackage{mathtools}
\usepackage{rotating}
\usepackage{thmtools, thm-restate}
\usepackage{bm}
\usepackage{hyperref}
\usepackage{mathtools}
\usepackage{mathbbol}
\usepackage{makecell}
\usepackage{pifont}
\usepackage{appendix}
\usepackage{diagbox}      
\usepackage{soul}

\definecolor{darkred}  {rgb}{0.5,0,0}
\definecolor{darkblue} {rgb}{0,0,0.5}
\definecolor{darkgreen}{rgb}{0,0.5,0}
\hypersetup{
  colorlinks = true,
  urlcolor  = blue,         
  linkcolor = darkblue,     
  citecolor = darkgreen,    
  filecolor = darkred    
}
\theoremstyle{definition}

\newtheorem{corollary}{Corollary}

\newtheorem{lemma}{Lemma}
\newtheorem{proposition}{Proposition}

\newtheorem{theorem}{Theorem}

\newtheorem{example}{Example}

\newtheorem*{remark}{Remark}

\newcommand{\conv}{\text{conv}}

\newcommand{\mbb}{\mathbb}
\newcommand{\mc}{\mathcal}
\newcommand{\msf}{\mathsf}

\newcommand{\tr}{\textrm{Tr}}

\usepackage{multirow}
\newcommand{\op}[2]{|#1\rangle\langle#2|}

\definecolor{cool_green}{rgb}{0.0, 0.5, 0.0}
\newcommand{\norm}[1]{\left\lVert#1\right\rVert}
\usepackage{blkarray}
\usepackage{adjustbox}
\usepackage{graphicx}
\usepackage{dcolumn}
\usepackage{bm}

\usepackage[normalem]{ulem}

\begin{document}

\preprint{APS/123-QED}

\title{Exact Incompatibility-Breaking Criterion for Unital Qubit Channels}

\author{Yujie Zhang}
\email{yujie4physics@gmail.com}
\affiliation{Institute for Quantum Computing and Department of Physics \& Astronomy,
University of Waterloo, 200 University Ave W, Waterloo, Ontario, N2L 3G1, Canada}
\affiliation{Perimeter Institute for Theoretical Physics, 31 Caroline Street North, Waterloo, Ontario, Canada N2L 2Y5}

\date{\today}

\begin{abstract}
We study when a noisy qubit channel renders all positive-operator-valued measures (POVMs) jointly measurable. For every unital qubit channel, we derive the exact incompatibility-breaking criterion and prove that projective measurements already determine the boundary for arbitrary POVMs. Through the steering--joint-measurability correspondence, this result gives the exact POVM-steering boundary for all two-qubit states with maximally mixed marginals, with the Werner states recovered as a special case. For nonunital qubit channels, we construct a generally asymmetric parent POVM and obtain an explicit sufficient incompatibility-breaking condition, which in turn yields a sufficient unsteerability criterion for arbitrary two-qubit states.

\end{abstract}

\maketitle


Quantum measurements are not all jointly measurable. This simple
fact is one of the operational roots of nonclassicality
\cite{Heinosaari2016,Guhne2023}. In Bell scenarios, measurement incompatibility is necessary for nonlocality, although the converse fails in general~\cite{Quintino2015,CavalcantiSkrzypczyk2016, Hirsch2018,BeneVertesi2018} and is recovered only in particular restricted or multipartite settings
\cite{Wolf2009,Plavala2025}. In steering scenarios~\cite{Uola2020}, joint-measurability problems can be mapped one-to-one to steering problems~\cite{Quintino2014,Uola2014,Uola2015,Cavalcanti2016, Kiukas2017}. Measurement incompatibility is also closely related to generalized contextuality \cite{Spekkens2005,TavakoliUola2020,Selby2023, ZhangSchmidYingSpekkens2026,ZhangYingSchmid2025}, and admits operational characterizations through state-discrimination
and programmable-measurement tasks
\cite{Carmeli2019a,Skrzypczyk2019,Buscemi2020}.

A family of POVMs $\{\msf M^{(x)}\}_x$, with
$\msf M^{(x)}=\{M_{a|x}\}_a$, is jointly measurable if there exists a single parent POVM $\Pi=\{\Pi_\lambda\}_\lambda$, together with
classical postprocessings $p(a|x,\lambda)$, such that
\begin{align}
M_{a|x}=\sum_\lambda p(a|x,\lambda)\Pi_\lambda
\qquad
\forall a,x.
\end{align}
Otherwise, $\{\msf M^{(x)}\}_x$ is incompatible.


An important version of the compatibility problem asks whether a
quantum channel renders an entire measurement class compatible. We model the noise by a completely positive, trace-preserving (CPTP) map $\mc N$, acting on measurement effects
through its Heisenberg adjoint $\mc N^\dagger$. For a measurement class $\mc C$, the channel is $\mc C$-incompatibility breaking if $\{\mc N^\dagger(\msf M^{(x)})\}_x$ is jointly measurable for every finite family $\{\msf M^{(x)}\}_x\subseteq\mc C$ \cite{Heinosaari2015, zhang2026parallel}. Taking $\mc C$ to be all PVMs or all POVMs defines PVM- and POVM-incompatibility-breaking channels, respectively. PVMs describe sharp measurements, whereas POVMs are the most general measurements allowed by quantum theory and can offer
advantages over PVMs in several information-theoretic tasks
\cite{Peres1988,OszmaniecBiswas2019,MassarPopescu1995,VertesiBene2010}. It is therefore natural to ask whether nonprojective measurements are also more robustly incompatible under noise. Since PVMs form a subclass of POVMs, every POVM-incompatibility-breaking channel is PVM-incompatibility-breaking. The central question is therefore whether PVM-incompatibility breaking already implies POVM-incompatibility breaking.

The canonical example is the qubit depolarizing channel $\Delta_r(X):=rX+(1-r){\tr[X]}\mbb{1}/2$. Determining its PVM- and
POVM-incompatibility-breaking thresholds is equivalent, through the
steering--joint-measurability correspondence
\cite{Quintino2014,Uola2014}, to the Werner problem
\cite{Werner2014}. The latter asks for which
values of $r$ the Werner state
\begin{align}
\rho_W(r) =r\op{\Psi^-}{\Psi^-}+(1-r)\frac{\mbb 1\otimes \mbb 1}{4}
\end{align}
admits a local-hidden-state model for all PVMs or all POVMs. Werner
constructed such a model for PVMs up to $r=1/2$~\cite{Werner1989}, whereas
Barrett obtained a model for arbitrary POVMs only up to
$r=5/12$~\cite{Barrett2002}. This gap was recently closed
independently in Refs.~\cite{Zhang2024,Renner2024}, which established
the exact POVM threshold $r=1/2$ by two different constructive proofs. Consequently, the depolarizing channel is POVM-incompatibility
breaking if and only if it is PVM-incompatibility breaking. 


Depolarizing noise is only the isotropic member of the qubit channels, just as Werner states form only a one-parameter family of two-qubit states. The natural question is therefore whether the PVM--POVM equivalence survives beyond this highly symmetric setting. On the channel side, this amounts to studying incompatibility breaking under general qubit noise; through the steering--joint-measurability correspondence, the same problem becomes that of determining POVM-steering criteria for general two-qubit states.


Inspired by Refs.~\cite{Nguyen2018, Nguyen2019, ZhangZhangChitambar2025}, we develop a general dual framework for studying the simulability of a set of noisy POVMs by a prescribed parent POVM. Rather than optimizing directly over classical postprocessings as in the usual primal, constructive approach to measurement compatibility, we characterize the convex set of POVMs simulable from the parent POVM through its support function. The universal simulability problem is then equivalent to ruling out every separating witness. For qubit channels, the resulting witness problem further reduces to a geometric inequality on the Bloch sphere.

For a unital qubit channel with Bloch matrix $D$, this reduction yields the exact incompatibility-breaking criterion $2\int_{S^2}\norm{D\hat n}d\mu(\hat n)\le1$, where $d\mu(\hat n)$ is the normalized rotation-invariant measure on $S^2$. Since this is precisely the known PVM criterion, PVM- and POVM-incompatibility breaking coincide for every unital qubit channel. Through the steering--joint-measurability correspondence, the same result gives the exact POVM-steering boundary for all two-qubit states with maximally mixed marginals, closing the gap between the known PVM criterion and previous sufficient conditions for POVMs~\cite{Jevtic2015,Nguyen2016,Nguyen2020}.

For nonunital qubit channels, we construct a generally asymmetric parent POVM and derive a sufficient incompatibility-breaking condition. Equivalently, this gives a sufficient criterion for arbitrary two-qubit states to be unsteerable from Alice to Bob under all POVMs.

In higher dimensions, the dual formulation reduces the question of whether the Haar parent at the PVM threshold also simulates all POVMs to an explicit inequality involving positive operators. Proving this inequality would extend the exact PVM--POVM equivalence from qubits to depolarizing channels in arbitrary dimension.


\textit{Support-function formulation--} We first compare the two convex sets entering the simulation problem: the
set of noisy POVMs and the set of POVMs obtainable by postprocessing a
prescribed parent. Requiring every noisy POVM to be simulable by a fixed parent POVM  is equivalent to the inclusion of the two convex sets. 

Let $(\Lambda,\nu)$ be a measure space, and 
$\Pi=\{\Pi_\lambda d\nu(\lambda)\}_{\lambda\in\Lambda}$ be a parent
POVM with $\Pi_\lambda\ge\mbb{0}$ and
$\int_\Lambda \Pi_\lambda d\nu(\lambda)=\mbb{1}$. A finite-outcome POVM
$\msf M=\{M_a\}_a$ is simulable from $\Pi$ if 
\begin{align}
M_a=\int_\Lambda p(a|\lambda)\Pi_\lambda d\nu(\lambda)
\label{eq:simulation}
\end{align}
for some classical postprocessing satisfying $p(a|\lambda)\ge0$ and $\sum_a p(a|\lambda)=1$.

For a fixed finite outcome set, let $\mathcal S_{\Pi}$ denote the compact convex set of finite-outcome POVMs simulable from $\Pi$. For every finite Hermitian family $Y=\{Y_a\}_a$, the support-function characterization of $\mathcal S_{\Pi}$ gives
\begin{align}
\msf M\in\mathcal S_{\Pi}\Longleftrightarrow\sum_a\tr(Y_aM_a)
\le h_{\mathcal S_{\Pi}}(Y)~~\forall Y=\{Y_a\}_a
\end{align}
where
$h_{\mathcal S_{\Pi}}(Y):=\sup_{\msf M'\in\mathcal S_{\Pi}}\sum_a\tr(Y_aM'_a)$.
Substituting Eq.~\eqref{eq:simulation} and optimizing
$p(a|\lambda)$ pointwise gives
\begin{align}
h_{\mathcal S_{\Pi}}(Y)&=\sup_{p(a|\lambda)}
\int_\Lambda\sum_a p(a|\lambda)\tr(Y_a\Pi_\lambda)d\nu(\lambda)
\notag\\
&=\int_\Lambda\max_a\tr(Y_a\Pi_\lambda)d\nu(\lambda).
\label{eq:support-general}
\end{align}

We now apply this characterization to the noisy POVMs $\{\mc N^\dagger(M_a)\}_a$, where $\mc N$ is a CPTP map and $\{M_a\}_a$ ranges over all finite-outcome POVMs. For a fixed witness family,
their support function is
\begin{align}
\sup_{\{M_a\}}\sum_a\tr\left[Y_a\mc N^\dagger(M_a)\right]&=\sup_{\{M_a\}}\sum_a\tr\left[\mc N(Y_a)M_a\right]\label{eq:pri-dual}\\
&=\inf_Z\left\{\tr Z:Z\ge\mc N(Y_a)\ \forall a\right\}.\notag
\end{align}
The last equality follows from SDP duality, which holds
because the primal problem is strictly feasible. Hence every $\{\mc N^\dagger(M_a)\}_a$ is simulable from $\Pi$ if and
only if
\begin{align}
\inf_Z\left\{\tr Z:Z\ge\mc N(Y_a)\ \forall a\right\}\le\int_\Lambda\max_a\tr(Y_a\Pi_\lambda)d\nu(\lambda)
\label{eq:dual-charac}
\end{align}
for every finite Hermitian family $\{Y_a\}_a$.

\begin{remark}
Every continuous-outcome POVM on a $d$-dimensional system is a classical randomization of POVMs with at most $d^2$ outcomes \cite{Chiribella2007,Ariano2005}. Since this classical randomization can be absorbed into the postprocessing of the parent POVM, it suffices to test
Eq.~\eqref{eq:dual-charac} for families $\{Y_a\}_{a=1}^{m}$ with $m\leq d^2$.
\end{remark}

For invertible $\mc N$, complementary slackness of the SDPs in Eq.~\eqref{eq:pri-dual} gives the following reduction.
\begin{proposition}
\label{prop:orthogonality-normal-form}
Let $\mc N$ be an invertible CPTP map and $\Pi=\{\Pi_\lambda d\nu(\lambda)\}_{\lambda\in\Lambda}$ be a parent
POVM. Then $\Pi$ simulates
$\{\mc N^\dagger(M_a)\}_a$ for every POVM $\{M_a\}_a$ if and only if
\begin{align}
\int_\Lambda\max_a\tr \left[-\mc N^{-1}(P_a)\Pi_\lambda\right]
d\nu(\lambda)\ge0
\label{eq:orthogonality-normal-form}
\end{align}
for every finite family $P_a\ge\mbb{0}$ for which there exists a POVM
$\{Q_a\}_a$ satisfying
\begin{align}
P_aQ_a=\mbb{0}
\qquad
\forall a.
\label{eq:orthogonality-condition}
\end{align}
\end{proposition}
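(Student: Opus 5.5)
The plan is to start from the support-function criterion in Eq.~\eqref{eq:dual-charac} and reparametrize the witness family through $\mc N$. Since $\mc N$ is invertible, for any Hermitian family $\{Y_a\}_a$ we can set $X_a:=\mc N(Y_a)$, so that $Y_a=\mc N^{-1}(X_a)$. As $Y_a$ ranges over all Hermitian families, $X_a$ also ranges over all Hermitian families; this uses that $\mc N$ is a linear bijection on Hermitian operators, because it is Hermiticity preserving. Condition~\eqref{eq:dual-charac} then reads
\begin{align}
\inf_Z\{\tr Z: Z\ge X_a\ \forall a\}\le \int_\Lambda\max_a\tr\left[\mc N^{-1}(X_a)\Pi_\lambda\right]d\nu(\lambda)
\notag
\end{align}
for all Hermitian $\{X_a\}_a$.

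Next I normalize by a common shift. Both sides are covariant under $X_a\mapsto X_a-Z$ for a fixed Hermitian $Z$. The left side decreases by $\tr Z$ by definition. On the right, $\mc N^{-1}$ is trace preserving and unital in the dual sense, $\tr[\mc N^{-1}(Z)\Pi_\lambda]$ integrates to $\tr[\mc N^{-1}(Z)]=\tr Z$, and the maximum over $a$ commutes with subtracting an $a$-independent term. So the right side also decreases by exactly $\tr Z$. The Heisenberg identity $\int \tr[\mc N^{-1}(Z)\Pi_\lambda]d\nu=\tr[\mc N^{-1}(Z)]$ uses $\int\Pi_\lambda d\nu=\mbb 1$, and $\tr\circ\mc N^{-1}=\tr$ follows from trace preservation of $\mc N$. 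I therefore take $Z^\star$ to be an optimal dual solution; it exists because the SDP is feasible and bounded, with strong duality as already noted. Setting $P_a:=Z^\star-X_a\ge\mbb 0$, the left side becomes $0$ and the inequality becomes exactly Eq.~\eqref{eq:orthogonality-normal-form}, with $-\mc N^{-1}(P_a)$ in place of $\mc N^{-1}(X_a)$.

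The orthogonality condition then comes from complementary slackness. Let $\{Q_a\}_a$ be an optimal primal POVM for $\sup\sum_a\tr(X_aQ_a)$, which exists by compactness of the POVM set with a fixed number of outcomes. Zero duality gap gives $\sum_a\tr[(Z^\star-X_a)Q_a]=\tr Z^\star-\sum_a\tr(X_aQ_a)=0$. Each term is a trace of a product of two positive operators, hence nonnegative, so each vanishes, and this yields $P_aQ_a=\mbb 0$. This proves the forward direction: simulability implies~\eqref{eq:orthogonality-normal-form} for the families that arise this way. It also shows that every witness family reduces to a pair $(P_a,Q_a)$ of the stated form, which is what the converse needs.

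For the converse, assume~\eqref{eq:orthogonality-normal-form} holds for all orthogonal-type families. Take arbitrary $X_a$ and reduce it as above to $P_a$ with an orthogonal POVM $Q_a$. The hypothesis gives right side $\ge0=$ left side after the shift, and undoing the shift gives~\eqref{eq:dual-charac}. Conversely, any $P_a\ge0$ with an orthogonal POVM $Q_a$ arises from $X_a=-P_a$ with $Z^\star=0$. This is optimal because $\tr 0=0=-\sum_a\tr(P_aQ_a)$ and primal values are at most $0$. Hence the necessity direction also covers every such family.

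The main point needing care is the shift covariance of the right-hand side through $\mc N^{-1}$. This map is generally not positive, but only linearity and trace preservation of $\mc N^{-1}$ are used, so I expect it to go through. The existence of optimizers is handled by compactness plus strict feasibility.
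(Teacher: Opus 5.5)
Your proposal is correct and is essentially the paper's argument. Your common shift by an optimal dual $Z^\star$ is the paper's decomposition $Y_a=\mc N^{-1}(Z_*)-\mc N^{-1}(P_a)$, relying on trace preservation of $\mc N^{-1}$ and $\int_\Lambda\Pi_\lambda d\nu(\lambda)=\mbb{1}$. Complementary slackness gives $P_aQ_a=\mbb{0}$ exactly as in the paper, and your treatment of an arbitrary orthogonal family via $X_a=-P_a$ with $Z^\star=\mbb{0}$ optimal matches the paper's converse; your explicit check that $\mc N^{-1}$ preserves Hermiticity, so that $-\mc N^{-1}(P_a)$ is a legitimate witness, is a detail the paper leaves implicit.
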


The proof is given in Appendix~\ref{app:orthogonality-normal-form}. Briefly, let $Z_*$ and $\{Q_a\}_a$ be optimal solutions of the dual and primal SDPs in Eq.~\eqref{eq:pri-dual}, respectively. For a witness family $\{Y_a\}_a$, define $P_a:=Z_*-\mc N(Y_a)$. Dual feasibility gives $P_a\geq\mbb 0$, while complementary slackness gives
$P_aQ_a=\mbb 0$. Conversely, any family $\{P_a\}_a$ satisfying these positivity and orthogonality conditions defines a witness through $Y_a=-\mc N^{-1}(P_a)$. If
such a family violates Eq.~\eqref{eq:orthogonality-normal-form}, then $\{Y_a\}_a$ witnesses the failure of universal simulation by the prescribed parent POVM.

Proposition~\ref{prop:orthogonality-normal-form} certifies a prescribed parent POVM but does not select one. The applications below therefore involve two steps: selecting a suitable parent and then proving, through the dual criterion, that it simulates the entire family of noisy POVMs.


\textit{Unital qubit channels--} We now apply Proposition~\ref{prop:orthogonality-normal-form} to
unital qubit channels. Let $\mc N_D$ be a unital qubit channel whose
Heisenberg adjoint has the Bloch representation
\begin{align}
\mc N_D^\dagger\left(x_0\mbb{1}+\vec x\cdot\vec\sigma\right)=x_0\mbb{1}+(D\vec x)\cdot\vec\sigma,
\label{eq:unital-heisenberg-channel}
\end{align}
where $D$ is a real $3\times3$ matrix.
\begin{theorem}
\label{thm:main}
A unital qubit channel $\mc N_D$ is incompatibility breaking for
arbitrary POVMs if and only if 
\begin{align}
2\int_{S^2}\norm{D\hat n}d\mu(\hat n)\le1.
\label{eq:unital-main-condition}
\end{align}
\end{theorem}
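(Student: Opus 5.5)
Necessity is immediate: every POVM-incompatibility-breaking channel is PVM-incompatibility breaking, and we take as known the PVM criterion for unital qubit channels, namely that $\mc N_D$ breaks the incompatibility of all PVMs if and only if Eq.~\eqref{eq:unital-main-condition} holds. So the real content is sufficiency. First we reduce to invertible, diagonal $D$. Local unitary pre- and post-rotations act as $D\mapsto R_1DR_2$ with $R_i\in SO(3)$. These rotations leave both the integral $\int\norm{D\hat n}d\mu$ and the property of being incompatibility breaking invariant. Hence we may take $D=\mathrm{diag}(\lambda_1,\lambda_2,\lambda_3)$. Singular $D$ can be handled at the end by a limiting/compactness argument on the parent POVM, or by perturbing to $(1-\epsilon)D$ and using that $\mathcal S_\Pi$ is closed. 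Since the criterion is monotone under scaling, it suffices to treat the boundary case $2\int\norm{D\hat n}d\mu=1$ with $D$ invertible.

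Next we choose the parent POVM that is optimal for PVMs: $\Pi_{\hat n}=\tfrac12\bigl(\mbb 1+c\,\tfrac{D\hat n}{\norm{D\hat n}}\cdot\vec\sigma\bigr)\,w(\hat n)$ with respect to $d\mu$. Equivalently, $\Pi$ is the image under $\mc N_D^\dagger$ of the Haar-distributed family of projectors, renormalized, in the form $\Pi_{\hat n}\propto\norm{D\hat n}\bigl(\mbb 1+\widehat{D\hat n}\cdot\vec\sigma\bigr)$. Normalization $\int\Pi=\mbb 1$ follows from symmetry ($\hat n\to-\hat n$) together with the condition $2\int\norm{D\hat n}d\mu=1$. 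This is exactly the parent that simulates all noisy PVMs by deterministic postprocessing $p(\pm|\hat n)=[\pm\hat m\cdot\hat n\ge0]$.

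We then apply Proposition~\ref{prop:orthogonality-normal-form}. By the Remark it suffices to consider $m\le4$ outcomes, and the orthogonality condition $P_aQ_a=\mbb 0$ with $\sum Q_a=\mbb 1$ on a qubit forces each $P_a$ to be either zero, full-rank-free, or rank one. Concretely, each nonzero $Q_a$ has a kernel only if $P_a$ is rank $\le1$ when $Q_a\neq0$, so WLOG $P_a=\alpha_a(\mbb 1-\hat u_a\cdot\vec\sigma)$ with $\alpha_a\ge0$ and $Q_a=\beta_a(\mbb 1+\hat u_a\cdot\vec\sigma)$, where $\sum_a\beta_a(1,\hat u_a)=(2,\vec 0)$. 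Writing $\mc N_D^{-1}(P_a)=\alpha_a(\mbb 1-D^{-1}\hat u_a\cdot\vec\sigma)$, condition~\eqref{eq:orthogonality-normal-form} becomes a Bloch-sphere inequality:
\begin{align}
\int_{S^2}\max_a\alpha_a\left(\frac{D^{-1}\hat u_a\cdot D\hat n}{\norm{D\hat n}}-1\right)\norm{D\hat n}\,d\mu(\hat n)\ge0,
\end{align}
up to positive constants. Noting that $D^{-1}\hat u_a\cdot D\hat n=\hat u_a\cdot\hat n$ when $D$ is symmetric diagonal, this reads
\begin{align}
\int\max_a\alpha_a\bigl(\hat u_a\cdot\hat n-\norm{D\hat n}\bigr)d\mu\ge0
\end{align}
for all $\alpha_a\ge0$ and all unit vectors $\hat u_a$ satisfying the balancing constraint $\sum\beta_a\hat u_a=0$.

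The main obstacle is proving this inequality, which is where genuine POVMs with non-antipodal $\hat u_a$ enter. Our first attempt is a convexity argument. We lower bound the max by the convex combination with weights $\beta_a\alpha_a^{-1}$-type multipliers chosen pointwise, or use $\max_a x_a\ge\sum_a q_a(\hat n)x_a$ for a partition-of-unity $q_a(\hat n)$ induced by the Voronoi-like regions $\{\hat n:\hat u_a\cdot\hat n\ \text{largest}\}$. We then use the balancing condition to cancel the linear terms $\hat u_a\cdot\hat n$ against $\norm{D\hat n}$ via the boundary identity $\int\norm{D\hat n}d\mu=\tfrac12=\int(\hat u\cdot\hat n)_+d\mu$. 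If that fails, the fallback is to reduce by rescaling $\alpha_a$ to the case where all terms are balanced. We would then prove the inequality first for $D=r\mbb 1$, reproducing the Werner result of Refs.~\cite{Zhang2024,Renner2024}, and extend it to anisotropic $D$ by a change of variables $\hat n\mapsto D\hat n/\norm{D\hat n}$. This change of variables maps $\norm{D\hat n}d\mu$ to a measure whose first moments are controlled.
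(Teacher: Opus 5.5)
Your reduction is the paper's. At the boundary your parent $2[\norm{D\hat n}\mbb{1}+(D\hat n)\cdot\vec\sigma]$ is exactly $\Pi_D$ with $\delta_D=0$, and Proposition~\ref{prop:orthogonality-normal-form} together with the rank-one form of $P_a$ reduces everything to $\int_{S^2}\max_a\alpha_a(\hat u_a\cdot\hat n-\norm{D\hat n})\,d\mu(\hat n)\ge0$ whenever $\vec 0\in\conv\{\hat u_a\}$. That inequality is the whole content of the theorem (it is Lemma~\ref{lem:weighted-ellipsoidal-average}), and your proposal does not prove it. None of the routes you list closes it.

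Fixed convex weights fail outright: by the balance condition $\sum_a\beta_a(\hat u_a\cdot\hat n-h)=-h$, which integrates to $-1/2$.

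The unweighted Voronoi bound $\int\max_a\alpha_af_a\,d\mu\ge\sum_a\alpha_a\int_{C_a}f_a\,d\mu$ needs every cell to contribute nonnegatively, because the $\alpha_a$ are arbitrary. That holds for $h\equiv1/2$ (the paper's short proof) but fails for anisotropic $D$. Take $D=\mathrm{diag}(1,0,0)$, so $h=|n_1|$; nearby invertible boundary channels $\mathrm{diag}(s,\epsilon,\epsilon)$ behave the same by continuity. Let $\hat u=\hat e_2$, and let its cell be a thin spherical triangle with one vertex just beyond $\hat e_2$ and opposite side a short arc of the great circle $n_2=0$ centred at $\hat e_1$. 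This cell is realized by adding the mirror images of $\hat e_2$ in the three sides, together with $-\hat e_2$, which also secures $\vec0\in\conv\{\hat u_a\}$. In polar coordinates about $\hat e_2$ the cell is essentially a wedge of small opening $w$ reaching the equator. Its contribution is therefore about $\frac{w}{4\pi}\int_0^{\pi/2}(\cos\theta-\sin\theta)\sin\theta\,d\theta=\frac{w}{4\pi}\bigl(\frac12-\frac\pi4\bigr)<0$. With $\alpha_{\hat e_2}=1$ and the other weights small, your lower bound is negative. Note also that $\int(\hat u\cdot\hat n)_+d\mu=1/4$, not $1/2$.

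Finally, the substitution $\hat m=D\hat n/\norm{D\hat n}$ only rewrites the integral as $\int\max_a\alpha_a\bigl((D^{-T}\hat u_a)\cdot\hat m-1\bigr)d\nu(\hat m)$, with $\nu$ the push-forward of $\norm{D\hat n}d\mu$. That is a nonuniform measure with non-unit vectors, so the isotropic (Werner) case does not transfer.

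The missing idea is to control the $h$-term globally rather than cell by cell. The paper proves the inequality for every even $h\ge0$ with $\int h\,d\mu=1/2$, in three steps:
\begin{enumerate}
\item Removing indices only lowers the maximum, so Carath\'eodory reduces to a minimal balanced subset with $|J|\le4$.
\item The antipodal substitution $\hat n\to-\hat n$, followed by averaging, rewrites $2F_J$ as $\int\max_{i,j}A_{ij}\,d\mu$. In $A_{ij}$ and $A_{ji}$ the linear parts $\pm(\alpha_i\hat u_i-\alpha_j\hat u_j)$ are opposite while the $h$-coefficient $\alpha_i+\alpha_j$ is shared.
\item Convex combinations using the middle weight $\alpha_2=(1-t)\alpha_1+t\alpha_3$ make the coefficient $\alpha_1+\alpha_3$ common to all retained branches, so only $\int h=1/2$ is ever used.
\end{enumerate}
This gives $F\ge\frac{\alpha_2}{16}(1-\norm{\hat u_1+\hat u_2+\hat u_3}^2)\ge0$ for balanced triples. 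A balanced quadruple always contains a triple with $\norm{\sum_{i\ne k}\hat u_i}\le1$, which handles $|J|=4$.

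Two smaller points. First, $(1-\epsilon)D$ is still singular when $D$ is, so you must mix with an invertible channel, as in the paper's $(1-\epsilon)D+\epsilon\eta I$. Second, rescaling to the boundary can leave the set of CP maps: for example, $D=-\eta I$ with $\eta\le1/3$ becomes $-\frac12 I$. This is harmless only once you note that Proposition~\ref{prop:orthogonality-normal-form} uses just invertibility and trace preservation, and that composing with the unital positive map $\Delta^\dagger_{1/c}$ preserves joint measurability. The paper sidesteps this by adding $\delta_D$ to the parent, which is one reason its lemma is stated for general even $h$.
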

\begin{proof}
Necessity follows from the known PVM-incompatibility-breaking
condition~\cite{Nguyen2016}, since PVMs form a subclass of POVMs.  For sufficiency, define
\begin{align}
m_D:=\int_{S^2}\norm{D\hat n}d\mu(\hat n), \qquad \delta_D:=\frac{1}{2}-m_D\ge0,
\label{eq:unital-norm}
\end{align}
and consider the parent POVM
$\{\Pi_D(\hat n)d\mu(\hat n)\}_{\hat n\in S^2}$, where
$d\mu(\hat n)$ is the rotation-invariant measure and
\begin{align}
\Pi_D(\hat n):=2\left[\left(\norm{D\hat n}+\delta_D\right)\mbb1+(D\hat n)\cdot\vec\sigma\right].
\label{eq:ellipsoidal-parent}
\end{align} 
This defines a parent POVM: positivity follows from $\norm{D\hat n}+\delta_D\ge\norm{D\hat n}$, and normalization follows
from Eq.~\eqref{eq:unital-norm}. Moreover,
\begin{align}
\int_{\hat u\cdot\hat n\ge0}
\Pi_D(\hat n)d\mu(\hat n)=\frac{1}{2}\left[\mbb1+(D\hat u)\cdot\vec\sigma\right],
\label{eq:ellipsoid-hemisphere}
\end{align}
so $\Pi_D(\hat n)$ simulates $\{\mc N_D^{\dagger}(M_a)\}_a$ for every qubit PVM $\{M_a\}_a$.

We first assume that $D$ is invertible. The same explicit parent POVM for singular $D$ follows from the limiting argument in Appendix~\ref{app:singular-general}.

By Proposition~\ref{prop:orthogonality-normal-form}, it is now sufficient to prove
\begin{align}
\int_{S^2}\max_a\tr\left[-\mc N_D^{-1}(P_a)\Pi_D(\hat n)\right]d\mu(\hat n)\ge 0
\label{eq:unital-positive-form}
\end{align}
for every finite family $P_a\ge\mbb{0}$ for which there exists a POVM
$\{Q_a\}_a$ satisfying $P_aQ_a=\mbb{0}$ for all $a$.

If $P_a=\mbb{0}$ for some $a$, the maximum in
Eq.~\eqref{eq:unital-positive-form} contains the zero branch, and the
claim is immediate. Now let
\begin{align}
\mc A:=\{a:Q_a\ne\mbb{0}\}.
\end{align}
Restricting the maximum in Eq.\eqref{eq:unital-positive-form} to $\mc A$ can only decrease it. For every $a\in\mc A$, positivity of $P_a$, $Q_a$ and $P_aQ_a=\mbb{0}$ imply
\begin{align}
P_a=\alpha_a\left(\mbb{1}-\hat u_a\cdot\vec\sigma\right),\qquad Q_a=
\beta_a\left(\mbb{1}+\hat u_a\cdot\vec\sigma\right),
\label{eq:qubit-PQ-form}
\end{align}
for some $\alpha_a,\beta_a>0$ and $\hat u_a\in S^2$. The normalization
$\sum_{a\in\mc A}Q_a=\mbb{1}$ gives $\sum_{a\in\mc A}\beta_a=1$, $\sum_{a\in\mc A}\beta_a\hat u_a=\vec0$, and
\begin{align}
\vec0\in\conv\{\hat u_a:a\in\mc A\}.
\label{eq:unital-balance}
\end{align}

With the convention in Eq.~\eqref{eq:unital-heisenberg-channel}, one has 
\begin{align}
\mc N_D^{-1}(P_a)=\alpha_a\left[\mbb1-(D^{-T}\hat u_a)\cdot\vec\sigma\right].
\end{align}
Combining this with Eq.~\eqref{eq:ellipsoidal-parent} gives
\begin{align}
\tr\left[-\mc N_D^{-1}(P_a)\Pi_D(\hat n)\right]=4\alpha_a\left[\hat u_a\cdot\hat n-\norm{D\hat n}-\delta_D\right].
\label{eq:unital-geometric-reduction}
\end{align}
Define the nonnegative and even function
\begin{subequations}
\begin{align}
&h_D(\hat n):=\norm{D\hat n}+\delta_D\\
&\int_{S^2}h_D(\hat n)d\mu(\hat n)=\frac{1}{2}.
\label{eq:average condition}
\end{align}
\end{subequations}
Thus, the left-hand side of Eq.~\eqref{eq:unital-positive-form} is bounded below by
\begin{align}
4\int_{S^2}\max_{a\in\mc A}\alpha_a\left[\hat u_a\cdot\hat n-h_D(\hat n)\right]d\mu(\hat n)\ge0.
\end{align}
The last inequality follows from Lemma~\ref{lem:weighted-ellipsoidal-average}, using only the established conditions in Eqs.~\eqref{eq:average condition} and \eqref{eq:unital-balance}.

Thus, the PVM- and POVM-incompatibility-breaking criteria coincide throughout the unital qubit class.
\end{proof}

The proof of Theorem~\ref{thm:main} uses the following geometric lemma.
\begin{lemma}
\label{lem:weighted-ellipsoidal-average}
Let $h:S^2\rightarrow[0,\infty)$ be even and satisfy
\begin{align}
\int_{S^2}h(\hat n)d\mu(\hat n)=\frac{1}{2}.
\end{align}
Let $\{\hat u_i\}_{i\in\mc I}\subset S^2$ be a finite family such that $\vec0\in\conv\{\hat u_i:i\in\mc I\}$. Then, for arbitrary $\alpha_i>0$,
\begin{align}
\int_{S^2}
\max_{i\in\mc I}
\alpha_i\left(
\hat u_i\cdot\hat n-h(\hat n)
\right)d\mu(\hat n)
\ge0.
\label{eq:weighted-ellipsoidal-average}
\end{align}
\end{lemma}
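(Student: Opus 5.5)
The plan is to lower-bound the integrand pointwise by a branch-selection, pair antipodal points using the evenness of $h$, and reduce everything to a purely geometric inequality about the family $\{\hat u_i\}$ that no longer involves $h$. Only the mean $\int h\,d\mu=\tfrac12$ and evenness will be used.

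\emph{Reductions.} Deleting indices from $\mc I$ can only decrease the maximum. By Carath\'eodory's theorem I may therefore assume that $\{\hat u_i\}$ consists of $k\in\{2,3,4\}$ affinely independent points with $\vec 0$ in the relative interior of their simplex. Write $\sum_i\beta_i\hat u_i=\vec0$ with $\beta_i>0$ and $\sum_i\beta_i=1$.

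\emph{Antipodal case $k=2$.} Here $\hat u_2=-\hat u_1=:-\hat u$, and this case is the template. Let $H=\{\hat n:\hat u\cdot\hat n\ge0\}$. For $\hat n\in H$, pick branch $1$ at $\hat n$ and branch $2$ at $-\hat n$. Evenness of $h$ then gives
\begin{align}
f(\hat n)+f(-\hat n)\ge(\alpha_1+\alpha_2)\left(|\hat u\cdot\hat n|-h(\hat n)\right),
\end{align}
where $f$ is the integrand in Eq.~\eqref{eq:weighted-ellipsoidal-average}. Integrating over $H$ gives $\int_{S^2}f\,d\mu\ge(\alpha_1+\alpha_2)\left(\tfrac14-\tfrac14\right)=0$, using $\int|\hat u\cdot\hat n|d\mu=\tfrac12$. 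This shows the constant $\tfrac12$ is sharp and the argument is tight here.

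\emph{General case.} I use a fractional version of the same selection. For each $\hat n$, take weights $w_i(\hat n)\ge0$ with $\sum_i w_i(\hat n)/\alpha_i=1$. Since $\lambda_i:=w_i/\alpha_i$ is a probability vector,
\begin{align}
f(\hat n)\ge\sum_i w_i(\hat n)\left(\hat u_i\cdot\hat n-h(\hat n)\right).
\end{align}
Integrating, and using evenness of $h$ to replace $W(\hat n):=\sum_iw_i(\hat n)$ by its symmetrization, it suffices to construct $w$ such that $W(\hat n)+W(-\hat n)\equiv 2c$ is constant and
\begin{align}
\int_{S^2}\sum_i w_i(\hat n)\,\hat u_i\cdot\hat n\,d\mu(\hat n)\ge \frac c2 .
\end{align}
The point is that $h$ then drops out entirely. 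Linear-programming duality suggests this reformulation is lossless, so the question becomes purely geometric.

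\emph{Proposed construction.} Partition the sphere into the normal cones of the polytope $\conv\{\hat u_i\}$, $C_i=\{\hat n:\hat u_i\cdot\hat n=\max_j\hat u_j\cdot\hat n\}$. On $C_i$, put weight on $i$. At the antipode $-\hat n$, put complementary weight on the indices maximizing $\hat u_j\cdot(-\hat n)$. The weights $\beta_i$ are then used to rebalance, so that $W(\hat n)+W(-\hat n)$ is constant; the freedom to mix indices in $\lambda$ is exactly what absorbs unequal $\alpha_i$.

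With this choice the required bound becomes a mean-width-type inequality. The key geometric input is that a polytope inscribed in $S^2$ and containing the origin has mean width at least that of a diameter, i.e.
\begin{align}
\int_{S^2}\max_i \hat u_i\cdot\hat n\, d\mu(\hat n)\ge \frac12 .
\end{align}
This inequality is all that is needed when the $\alpha_i$ are equal. I would prove it by monotonicity of mean width under inclusion, combined with the fact that any such polytope can be continuously shrunk toward a diameter without increasing $\int\max_i\hat u_i\cdot\hat n\,d\mu$. I would also check it directly for the triangle and tetrahedron cases.

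\emph{Main obstacle.} The hardest step is handling unequal $\alpha_i$ in the general case. Antipodal pairing requires $W(\hat n)+W(-\hat n)$ to be constant, while the cone selection naturally produces $\alpha_{i(\hat n)}+\alpha_{i(-\hat n)}$, which varies. My first attempt would be to use the barycentric weights $\beta_i$ to split each antipodal pair of cones, in the spirit of how the case $k=2$ splits the sphere into two hemispheres with total weight $\alpha_1+\alpha_2$. Failing that, I would exploit convexity and positive homogeneity of $\alpha\mapsto\int f\,d\mu$. This reduces the problem to checking the minimizer over the simplex of $\alpha$'s, where first-order optimality conditions should force a balanced configuration that is amenable to the pairing argument.
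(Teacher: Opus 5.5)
Your reductions are sound. The Carath\'eodory step, the $k=2$ pairing, and the dual reformulation are all correct: you take a pointwise convex combination of branches with $W(\hat n)+W(-\hat n)\equiv 2c$, so that evenness and $\int h\,d\mu=\tfrac12$ eliminate $h$. The paper's proof is in fact an instance of exactly this scheme.

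The equal-weight case does reduce to $\int_{S^2}\max_i\hat u_i\cdot\hat n\,d\mu\ge\tfrac12$, but your proposed proof of it is not yet an argument. Monotonicity under inclusion helps only if $\conv\{\hat u_i\}$ contains a diameter, which fails, e.g., for an equilateral triangle on a great circle. The ``continuous shrinking without increase'' is then the whole claim. (It does follow easily by integrating $2\hat u_i\cdot\hat n-1$ over each normal cone $C_i\subseteq\{\hat u_i\cdot\hat n\ge0\}$ in polar coordinates about $\hat u_i$, as in the paper's constant-$h$ proof.)

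The genuine gap is that $k\in\{3,4\}$ with unequal $\alpha_i$, which is the entire content of the lemma, is left open. ``Rebalancing with the $\beta_i$'' specifies no weights. The convexity fallback merely restates the problem: at an interior minimizer on the simplex, stationarity only says that the quantities $\int_{R_i}(\hat u_i\cdot\hat n-h)\,d\mu$, with $R_i$ the region where branch $i$ wins, all equal the minimum value. Nothing forces them to be nonnegative.

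The missing idea is an explicit balanced selection that bypasses normal cones. For three vectors, order $\alpha_1\le\alpha_2\le\alpha_3$ and write $\alpha_2=(1-t)\alpha_1+t\alpha_3$. Then use only four antipodal selections:
\begin{itemize}
\item branch $3$ at $\hat n$ with branch $1$ at $-\hat n$;
\item its reverse;
\item the mixture $t\,(1\text{ at }\hat n,\,2\text{ at }-\hat n)+(1-t)\,(3\text{ at }\hat n,\,2\text{ at }-\hat n)$;
\item its reverse.
\end{itemize}
Each carries total weight $\alpha_1+\alpha_3$. Hence $f(\hat n)+f(-\hat n)\ge\max\{|\vec v\cdot\hat n|,|\vec w\cdot\hat n|\}-(\alpha_1+\alpha_3)h(\hat n)$, with $\vec v=\alpha_3\hat u_3-\alpha_1\hat u_1$ and $\vec w=t\alpha_1\hat u_1+(1-t)\alpha_3\hat u_3-\alpha_2\hat u_2$. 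Integrating gives $2\int f\,d\mu\ge(\norm{\vec v+\vec w}+\norm{\vec v-\vec w})/4-(\alpha_1+\alpha_3)/2$. Testing the norms against $\hat u_3$ and $-\hat u_1$ then yields $\int f\,d\mu\ge\frac{\alpha_2}{16}\bigl(1-\norm{\hat u_1+\hat u_2+\hat u_3}^2\bigr)$ for \emph{any} triple.

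The balance condition gives $\norm{\sum_i\hat u_i}\le1$ for three points. For four points, some triple satisfies $\norm{\sum_{i\ne k}\hat u_i}\le1$, but that triple need not contain $\vec0$ in its hull. This is why the triple bound must be proved without the hull assumption.
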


The full proof is given in Sec.~\ref{app:weighted-ellipsoidal-average} in the appendix. For the isotropic case $h(\hat n)=\norm{D\hat{n}}=1/2$ with $D=\frac{1}{2} I$, the lemma admits the following short geometric proof.
\begin{proof}[Proof for $h(\hat n)=1/2$]
After discarding duplicate directions, let
\begin{align}
C_i:=\left\{\hat n\in S^2:\hat u_i\cdot\hat n\ge\hat u_j\cdot\hat n\ \forall j
\right\}
\end{align}
be the partitioning induced by $\{\hat u_i\}_i$. Since
$\vec0\in\conv\{\hat u_i\}_i$, one has
$\max_i\hat u_i\cdot\hat n\ge0$ for every $\hat n$, and hence
$C_i\subseteq\{\hat n:\hat u_i\cdot\hat n\ge0\}$.

Each $C_i$ is an intersection of hemispheres containing $\hat u_i$
, hence, in polar coordinates $(\theta,\varphi)$ about $\hat u_i$, we can write
$C_i=\{0\le\theta\le R_i(\varphi)\}$, where
$0\le R_i(\varphi)\le\pi/2$, and define $f_i(\hat n):=2\hat u_i\cdot\hat n-1$. Then
\begin{align*}
&\int_{C_i}f_i(\hat n)d\mu(\hat n)=\frac1{4\pi}\int\cos R_i(\varphi)\left[1-\cos R_i(\varphi)\right]d\varphi\ge0,
\end{align*}
and therefore,
\begin{align*}
&\int_{S^2}\max_{i\in\mc I}\alpha_if_i(\hat n)d\mu(\hat n)\ge\sum_i\alpha_i\int_{C_i}f_i(\hat n)d\mu(\hat n)\ge0.
\end{align*}
\end{proof}


Since Eq.~\eqref{eq:ellipsoidal-parent} reduces to the Haar-random qubit parent at $D=\frac{1}{2} I$, we have
\begin{corollary}
\label{coro:main}
The Haar-random qubit parent simulates every qubit POVM depolarized to visibility $1/2$.
\end{corollary}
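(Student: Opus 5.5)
Corollary~\ref{coro:main} is the special case $D=\frac12 I$ of Theorem~\ref{thm:main}, so the plan is to specialize the parent POVM and then check that the sufficiency argument applies verbatim. With $D=\frac12 I$ the channel $\mc N_D$ is the depolarizing channel $\Delta_{1/2}$. Its Heisenberg adjoint is self-dual and multiplies Bloch vectors by $1/2$, so the noisy POVMs $\{\mc N_D^\dagger(M_a)\}_a$ are exactly the qubit POVMs depolarized to visibility $1/2$.

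\textbf{Step 1: the parent POVM.} We compute $m_D=\int_{S^2}\norm{\tfrac12\hat n}d\mu(\hat n)=\tfrac12$. Hence $\delta_D=0$, and Eq.~\eqref{eq:ellipsoidal-parent} becomes
\begin{align*}
\Pi_D(\hat n)=\mbb 1+\hat n\cdot\vec\sigma=2\op{\hat n}{\hat n}.
\end{align*}
This is exactly the Haar-random rank-one qubit POVM $\{2\op{\psi}{\psi}d\psi\}$, with $d\mu$ the normalized invariant measure.

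\textbf{Step 2: invoke the theorem's sufficiency argument.} Since $D$ is invertible, the proof of Theorem~\ref{thm:main} applies through Proposition~\ref{prop:orthogonality-normal-form}, with no need for the singular-$D$ limit. The witness quantity reduces to $4\alpha_a(\hat u_a\cdot\hat n-\tfrac12)$, that is, to the constant-$h$ case $h\equiv\tfrac12$ of Lemma~\ref{lem:weighted-ellipsoidal-average}. The short geometric proof given above then yields nonnegativity. For that proof I would verify the cap integral: in polar coordinates about $\hat u_i$, with $d\mu=\sin\theta\,d\theta\,d\varphi/(4\pi)$, one has $\int_0^{R}(2\cos\theta-1)\sin\theta\,d\theta=\cos R(1-\cos R)$. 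This is nonnegative for $R\le\pi/2$, and the bound $R\le\pi/2$ is where the balance condition $\vec 0\in\conv\{\hat u_i\}$ enters.

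\textbf{Main obstacle.} The delicate point has already been handled upstream: the Remark restricts to finitely many outcomes, and Proposition~\ref{prop:orthogonality-normal-form} reduces general witnesses to the orthogonal normal form. Here the only thing left to check is the identification of $\Pi_{D}$ at $D=\tfrac12 I$ with the Haar parent. Because a single fixed parent simulates every noisy POVM, any finite family of such POVMs is jointly measurable. The corollary therefore also recovers the exact Werner POVM threshold $r=1/2$.
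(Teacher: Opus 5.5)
Your proposal is correct and follows the paper's own route. At $D=\frac12 I$ one has $\delta_D=0$, so the parent of Theorem~\ref{thm:main} becomes the Haar-random qubit parent $\Pi_D(\hat n)=\mbb 1+\hat n\cdot\vec\sigma$. The sufficiency argument via Proposition~\ref{prop:orthogonality-normal-form} then reduces to the case $h\equiv\frac12$ of Lemma~\ref{lem:weighted-ellipsoidal-average}. Your check that $\int_0^{R}(2\cos\theta-1)\sin\theta\,d\theta=\cos R(1-\cos R)\ge0$, with the balance condition forcing $R\le\pi/2$, matches the paper's short geometric proof of that case.
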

Corollary~\ref{coro:main} recovers the exact Werner steering threshold. Unlike the constructive proofs of Refs.~\cite{Zhang2024, Renner2024}, the present argument uses only the dual witness inequality and applies to every unital qubit channel.

\textit{Nonunital qubit channels.--}  The dual formulation in Proposition~\ref{prop:orthogonality-normal-form} can also be used to derive a sufficient incompatibility-breaking criterion for nonunital qubit channels. Let $\mc N_{D,\vec t}$ be a qubit channel whose Heisenberg adjoint has the
Bloch representation
\begin{align}
\mc N_{D,\vec t}^{\dagger}\left(x_0\mbb{1}+\vec x\cdot\vec\sigma\right)=\left(x_0+\vec t\cdot\vec x\right)\mbb{1}+(D\vec x)\cdot\vec\sigma,
\label{eq:nonunital-heisenberg-channel}
\end{align}
where $D$ is a real $3\times3$ matrix and $\vec t\in\mbb R^3$.
\begin{proposition}
\label{prop:nonunital-sufficient}
The qubit channel $\mc N_{D,\vec t}$ is incompatibility
breaking for arbitrary POVMs if
\begin{align}
2\int_{S^2}\norm{D\hat n} d\mu(\hat n)+\norm{\vec t}\le1.
\label{eq:nonunital-sufficient-condition}
\end{align}
A parent POVM realizing this condition is of the form
\begin{align}
\Pi_{D,\vec t}(\hat n)d\mu(\hat n):=2\left[\left(h_{D,\vec t}(\hat n)+\vec t\cdot\hat n\right)\mbb{1}+(D\hat n)\cdot\vec\sigma\right]d\mu(\hat n),
\label{eq:nonunital-parent-main}
\end{align}
with $h_{D,\vec t}(\hat n):=\norm{D\hat n}+\left|\vec t\cdot\hat n\right|+\delta_{D,\vec t}$, and $\delta_{D,\vec t}=\frac{1}{2}-\frac{\norm{\vec t}}{2}-\int_{S^2}\norm{D\hat n} d\mu(\hat n)\ge 0$. The function $h_{D,\vec t}(\hat n)$ is nonnegative and even. The resulting operator density $\Pi_{D,\vec t}(\hat n)$ is positive and normalized, but is generally not centrally symmetric. 
\end{proposition}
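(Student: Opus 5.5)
The plan is to reproduce the proof of Theorem~\ref{thm:main} with the nonunital parent in Eq.~\eqref{eq:nonunital-parent-main}. The key point is that the translation $\vec t$ is designed to cancel exactly after inverting the channel, so the witness integrand again reduces to the form handled by Lemma~\ref{lem:weighted-ellipsoidal-average}.

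\emph{Step 1: the parent is a valid POVM, and $h_{D,\vec t}$ satisfies the hypotheses of the lemma.}
The identity coefficient of $\Pi_{D,\vec t}(\hat n)/2$ is
\[
\norm{D\hat n}+\left(|\vec t\cdot\hat n|+\vec t\cdot\hat n\right)+\delta_{D,\vec t}\ \ge\ \norm{D\hat n},
\]
so positivity follows exactly as in the unital case. For normalization, the odd terms $\vec t\cdot\hat n$ and $(D\hat n)\cdot\vec\sigma$ integrate to zero against the rotation-invariant measure. Using $\int_{S^2}|\vec t\cdot\hat n|\,d\mu(\hat n)=\norm{\vec t}/2$, we obtain
\[
\int_{S^2}h_{D,\vec t}\,d\mu=m_D+\tfrac{\norm{\vec t}}{2}+\delta_{D,\vec t}=\tfrac12 .
\]
Hence $\int_{S^2}\Pi_{D,\vec t}\,d\mu=\mbb 1$. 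By construction $h_{D,\vec t}$ is even and nonnegative, and $\delta_{D,\vec t}\ge0$ is precisely condition~\eqref{eq:nonunital-sufficient-condition}.

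\emph{Step 2: dual reduction for invertible $D$.}
If $D$ is invertible, then $\mc N_{D,\vec t}$ is invertible and Proposition~\ref{prop:orthogonality-normal-form} applies. As in the proof of Theorem~\ref{thm:main}, we first discard the trivial case in which some $P_a=\mbb 0$. We then restrict to $\mc A=\{a:Q_a\neq\mbb0\}$ and write $P_a=\alpha_a(\mbb1-\hat u_a\cdot\vec\sigma)$ with $\vec 0\in\conv\{\hat u_a\}$.

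Next we compute the Schrödinger map. From $\tr[\mc N(Y)M]=\tr[Y\mc N^\dagger(M)]$ with $Y=y_0\mbb1+\vec y\cdot\vec\sigma$, we get
\[
\mc N_{D,\vec t}(Y)=y_0\mbb1+(y_0\vec t+D^T\vec y)\cdot\vec\sigma .
\]
Inverting on $P_a$ gives
\[
-\mc N_{D,\vec t}^{-1}(P_a)=\alpha_a\left[-\mbb1+\left(D^{-T}(\hat u_a+\vec t)\right)\cdot\vec\sigma\right].
\]
Pairing this with the parent yields
\[
\tr\left[-\mc N^{-1}_{D,\vec t}(P_a)\Pi_{D,\vec t}(\hat n)\right]=4\alpha_a\left[-h_{D,\vec t}(\hat n)-\vec t\cdot\hat n+(\hat u_a+\vec t)\cdot\hat n\right]=4\alpha_a\left[\hat u_a\cdot\hat n-h_{D,\vec t}(\hat n)\right].
\]
The $\vec t\cdot\hat n$ terms cancel. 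This is why the parent carries the asymmetric shift $\vec t\cdot\hat n$ in its identity part while $h_{D,\vec t}$ absorbs the compensating even term $|\vec t\cdot\hat n|$.

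\emph{Step 3: conclusion via Lemma~\ref{lem:weighted-ellipsoidal-average}.}
By Step 1, $h=h_{D,\vec t}$ satisfies the lemma's hypotheses, so the lemma gives nonnegativity of the integral in Eq.~\eqref{eq:orthogonality-normal-form}. This proves the claim for invertible $D$.

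\emph{Step 4 (the main obstacle): singular $D$.}
Here $\mc N_{D,\vec t}$ is not invertible, and a naive perturbation $D\to D+\epsilon I$ may increase $m_D$ and break $\delta\ge0$. I would instead use a family that keeps $\delta\ge0$ for every $\epsilon$, for example
\[
D_\epsilon=(1-\epsilon)D+\epsilon c\,I,\qquad \vec t_\epsilon=(1-\epsilon)\vec t,
\]
with $c$ small enough that the left side of~\eqref{eq:nonunital-sufficient-condition} does not increase. By convexity of $D\mapsto\int\norm{D\hat n}\,d\mu$, taking $c\le 1/2$ suffices. One also needs $D_\epsilon$ to be invertible, which holds for all but finitely many $\epsilon$; a generic choice of the added invertible matrix handles this.

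The limiting step then uses three facts. First, $\Pi_{D_\epsilon,\vec t_\epsilon}\to\Pi_{D,\vec t}$ uniformly on $S^2$, so the support functions $h_{\mc S_\Pi}(Y)$ in Eq.~\eqref{eq:support-general} converge. Second, $\mc N^\dagger_{D_\epsilon,\vec t_\epsilon}(M_a)\to\mc N^\dagger_{D,\vec t}(M_a)$. Third, the support-function characterization is closed under these limits. Together these show that the limiting parent simulates every noisy POVM, mirroring Appendix~\ref{app:singular-general}.
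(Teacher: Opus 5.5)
Your proposal is correct and follows the paper's proof essentially step for step. It uses the same positivity and normalization check, the same inverse-map computation in which $\vec t\cdot\hat n$ cancels to leave $4\alpha_a[\hat u_a\cdot\hat n-h_{D,\vec t}(\hat n)]$, the same appeal to Lemma~\ref{lem:weighted-ellipsoidal-average}, and, for singular $D$, the same perturbation $D_\epsilon=(1-\epsilon)D+\epsilon cI$, $\vec t_\epsilon=(1-\epsilon)\vec t$ followed by a support-function limit. One point worth stating explicitly: for $0<c\le 1/2$ this perturbation is the mixture $(1-\epsilon)\mc N_{D,\vec t}+\epsilon\Delta_c$ with a partially depolarizing channel, so the approximants are genuine CPTP maps, as Proposition~\ref{prop:orthogonality-normal-form} requires.
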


The proof is given in
Appendix~\ref{app:nonunital-sufficient}. For
$\vec t=\vec0$, Eq.~\eqref{eq:nonunital-sufficient-condition} reduces
to the exact unital condition in
Eq.~\eqref{eq:unital-main-condition}. For $\vec t\ne\vec0$, Eq.~\eqref{eq:nonunital-sufficient-condition} is only claimed to be sufficient.


\textit{Steering criteria for two-qubit states--} We next translate the channel conditions into steering criteria.
For steering from Alice to Bob, an invertible local filter on Bob
brings every state with full-rank $\rho_B$ to the canonical form
\begin{align}
\rho_{AB}=\frac{1}{4}\left[\mbb 1\otimes\mbb 1+\vec a\cdot\vec\sigma\otimes\mbb 1+\sum_{j,k=1}^{3}T_{jk}\sigma_j\otimes\sigma_k
\right],
\label{eq:canonical-two-qubit-state}
\end{align}
without changing its steerability~\cite{Sania2014}. The
steering--joint-measurability correspondence~\cite{Uola2015,Kiukas2017} then yields the following theorem, detailed in the Supplemental Material.

\begin{theorem}
\label{thm:two-qubit-steering}
An arbitrary two-qubit state is unsteerable from Alice to Bob under all POVMs whenever the parameters of its canonical form
satisfy
\begin{align}
2\int_{S^2}\norm{T\hat n}d\mu(\hat n)+\norm{\vec a}\le1.
\label{eq:general-two-qubit-unsteerability}
\end{align}
If $\vec a=\vec0$, this condition is both necessary and sufficient.
In this case, the canonical state is locally unitarily equivalent to a Bell-diagonal state, and the condition coincides with the known unsteerability criteria under PVMs~\cite{Jevtic2015, Nguyen2016, Nguyen2019}.
\end{theorem}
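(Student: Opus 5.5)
The plan is to reduce Theorem~\ref{thm:two-qubit-steering} to Proposition~\ref{prop:nonunital-sufficient} and Theorem~\ref{thm:main} through the steering--joint-measurability correspondence. First reduce to the case where $\rho_B$ is full rank. If $\rho_B$ is rank deficient, a pure product-like structure on Bob's side makes the assemblage trivially unsteerable, or it can be handled as a limit. For full-rank $\rho_B$, apply the invertible local filter on Bob, which preserves steerability from Alice to Bob~\cite{Sania2014}, to bring the state to the canonical form~\eqref{eq:canonical-two-qubit-state} with $\rho_B=\mbb1/2$.

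Next, write the assemblage. For any POVM $\{M_{a|x}\}$ on Alice,
\begin{align*}
\sigma_{a|x}=\tr_A[(M_{a|x}\otimes\mbb1)\rho_{AB}].
\end{align*}
Expanding $M_{a|x}=x_0\mbb1+\vec x\cdot\vec\sigma$ gives
\begin{align*}
\sigma_{a|x}=\tfrac12\bigl[(x_0+\vec a\cdot\vec x)\mbb1+(T^T\vec x)\cdot\vec\sigma\bigr]/2\cdot 2 .
\end{align*}
Up to the factor $1/2$ (that is, $\rho_B^{1/2}=\mbb1/\sqrt2$), this is $\tfrac12\mc N^\dagger_{T^T,\vec a}(M_{a|x})$ in the convention of Eq.~\eqref{eq:nonunital-heisenberg-channel}. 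Since $\rho_B^{-1/2}\sigma_{a|x}\rho_B^{-1/2}$ form POVMs, the standard correspondence~\cite{Uola2015,Kiukas2017} applies. The assemblage admits a local-hidden-state model if and only if the POVMs $\{\mc N^\dagger_{T^T,\vec a}(\msf M^{(x)})\}$ are jointly measurable. Concretely, a parent $\Pi_\lambda$ gives hidden states $\sigma_\lambda=\Pi_\lambda/2$, and conversely. So unsteerability for all POVMs is exactly POVM-incompatibility breaking of $\mc N_{T^T,\vec a}$.

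Then invoke Proposition~\ref{prop:nonunital-sufficient} with $D=T^T$, $\vec t=\vec a$. Its condition reads $2\int\norm{T^T\hat n}d\mu+\norm{\vec a}\le1$. To match Eq.~\eqref{eq:general-two-qubit-unsteerability}, note that $\int\norm{M\hat n}d\mu$ depends only on the singular values of $M$, by rotation invariance of $\mu$ and the singular value decomposition. Hence the integrals for $T$ and $T^T$ coincide.

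For $\vec a=\vec0$ the channel is unital, and Theorem~\ref{thm:main} gives necessity and sufficiency. It remains to check that $\mc N_{T^T,\vec0}$ is indeed a CPTP map. This follows because $\rho_{AB}$ is a state and $\rho_B=\mbb1/2$, via the Choi correspondence. Local unitaries diagonalize $T$ up to signs, giving a Bell-diagonal state, and the condition then coincides with the known PVM criteria~\cite{Jevtic2015,Nguyen2016}.

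The main obstacle is a consistency issue. The map $\vec x\mapsto(\vec a\cdot\vec x,T^T\vec x)$ must be verified to be the adjoint of a legitimate channel, which is what Proposition~\ref{prop:nonunital-sufficient} needs. That legitimacy follows from positivity of $\rho_{AB}$ through Choi--Jamio{\l}kowski, with $\rho_B=\mbb1/2$ ensuring trace preservation. Also needed is care with the transpose and normalization factors in the correspondence.
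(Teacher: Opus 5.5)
\textbf{There is a genuine gap, at the step you flag as the main obstacle.} Your overall reduction matches the paper's: filter Bob to $\rho_B=\mbb{1}/2$, identify the assemblage with noisy POVMs, then invoke Proposition~\ref{prop:nonunital-sufficient} and Theorem~\ref{thm:main}. Your Bloch computation $2\tr_A[(M\otimes\mbb{1})\rho_{AB}]=(x_0+\vec a\cdot\vec x)\mbb{1}+(T^T\vec x)\cdot\vec\sigma$ is also correct.

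The problem is complete positivity. The map $M\mapsto 2\tr_A[(M\otimes\mbb{1})\rho_{AB}]$ is positive and unital, but in general it is \emph{not} completely positive. Its Choi matrix is $2\rho_{AB}^{T_A}$, not $2\rho_{AB}$. It is therefore CP iff $\rho_{AB}$ is PPT, which for two qubits means iff $\rho_{AB}$ is separable. So for every entangled state, which is the only interesting case, your $\mc N_{T^T,\vec a}$ is not a channel.

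A concrete counterexample is the Werner state at $r=1/2$. There $D=T^T=-\tfrac12 I$, which violates the Fujiwara--Algoet condition $|\lambda_1+\lambda_2|\le|1+\lambda_3|$, since $1\not\le 1/2$. Equivalently, $\rho_W^{T_A}$ has eigenvalue $-1/8$. Your claim that CPTP-ness ``follows from positivity of $\rho_{AB}$ via Choi'' is therefore false. As written, you cannot invoke Proposition~\ref{prop:nonunital-sufficient} and Theorem~\ref{thm:main}, which are stated for qubit channels, on this map.

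\textbf{The missing idea is a transposition on Alice's effects.} The paper uses this fix. Since $M^T$ has Bloch vector $\Theta\vec x$ with $\Theta=\mathrm{diag}(1,-1,1)$, one has
\[
\tr_A[(M\otimes\mbb{1})\rho_{AB}]=\tfrac12\mc N^\dagger_{D,\vec t}(M^T),\qquad D=T^T\Theta,\ \vec t=\Theta\vec a .
\]
The map $M\mapsto 2\tr_A[(M^T\otimes\mbb{1})\rho_{AB}]$ has Choi matrix $2\rho_{AB}\ge0$, so it is genuinely CP and unital. Transposition is a bijection on POVMs. Hence an LHS model for all of Alice's POVMs is exactly a parent POVM for all $\{\mc N^\dagger_{D,\vec t}(\msf M^{(x)})\}_x$, and unsteerability is equivalent to POVM-incompatibility breaking of this bona fide channel.

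With that in place, your singular-value argument finishes the job. $T^T\Theta$ has the same singular values as $T$ and $\norm{\Theta\vec a}=\norm{\vec a}$, which gives Eq.~\eqref{eq:general-two-qubit-unsteerability}. For $\vec a=\vec0$, necessity then follows from Theorem~\ref{thm:main}.

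Alternatively, you could keep your untransposed map and verify that the proofs of Propositions~\ref{prop:orthogonality-normal-form}, \ref{prop:nonunital-sufficient} and Theorem~\ref{thm:main} never use complete positivity. They use only linearity, invertibility, trace preservation, and the PVM criterion, which depends only on the ellipsoid $\{D\hat u\}$. That route works, but it has to be stated and checked explicitly; it is not the argument you gave.
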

\begin{example}
Consider the canonical two-qubit state $\rho^{AB}$ in
Eq.~\eqref{eq:canonical-two-qubit-state} with
\begin{align}
\vec a=\left(0,0,\frac7{50}\right),
\quad
T=\operatorname{diag}\left(\frac{13}{25},\frac{13}{25},-\frac1{20}\right).
\end{align}
It is positive semidefinite and has unit trace, while
\begin{align}
\lambda_{\min}\left([\rho^{AB}]^{T_B}\right)=\frac{95-2\sqrt{2753}}{400}<0,
\end{align}
and hence $\rho^{AB}$ is entangled by the PPT criterion. Moreover, using $\int_{S^2}\norm{T\hat n}d\mu(\hat n) \le \sqrt{\int_{S^2}\norm{T\hat n}^2d\mu(\hat n)}$,
\begin{align}
2\int_{S^2}\norm{T\hat n}d\mu(\hat n)+\norm{\vec a}
\le
\frac{\sqrt{1811}+7}{50}<1.
\end{align}
Theorem~\ref{thm:two-qubit-steering} therefore implies that
$\rho^{AB}$ is unsteerable from Alice to Bob under arbitrary POVMs. 

This example is nontrivial: Appendix~\ref{app:nontrivial-nonunital-example}
shows that $\rho^{AB}\notin\conv(\mc U_0\cup\text{SEP})$, where $\mc U_0$ denotes the set of physical canonical states with $\vec a=\vec 0$ satisfying Eq.~\eqref{eq:general-two-qubit-unsteerability}, and $\text{SEP}$ is the set of all separable states. 
\end{example}

\textit{Higher dimensions.--} The same dual framework applies to depolarizing noise in arbitrary dimension, the other natural generalization of the Werner problem. Let
\begin{subequations}
\begin{align}
\Delta_r(X)&=rX+(1-r)\frac{\tr X}{d}\mbb{1}, \\
\Pi_{\rm Haar}(d\psi)&=d\op{\psi}{\psi}d\mu_H(\psi).
\end{align}
\end{subequations}
The Haar parent simulates all depolarized PVMs up to $r_d=\frac{H_d-1}{d-1}$ with $H_d=\sum_{k=1}^d\frac{1}{k}$~\cite{Wiseman2007,Almeida2007,Ioannou2022}. Proposition~\ref{prop:orthogonality-normal-form} then gives the following exact reduction.
\begin{corollary}
\label{coro:haar-positive-form}
The Haar parent $\Pi_{\rm Haar}$ simulates every POVM
$\{\Delta_{r_d}(M_a)\}_a$ if and only if
\begin{align}
\int\max_a\left[\frac{1-r_d}{d}\tr(P_a)-\langle\psi|P_a|\psi\rangle\right]d\mu_H(\psi)\ge0
\label{eq:haar-positive-form}
\end{align}
for every finite family $P_a\ge\mbb{0}$ for which there exists a POVM
$\{Q_a\}_a$ satisfying $P_aQ_a=\mbb{0}$ for all $a$.
\end{corollary}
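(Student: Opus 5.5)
The corollary follows by specializing Proposition~\ref{prop:orthogonality-normal-form} to $\mc N=\Delta_{r_d}$ with parent $\Pi_{\rm Haar}$, so the plan is to check the hypotheses and compute the integrand explicitly. First, $\Delta_{r_d}$ is self-adjoint, so $\Delta_{r_d}^\dagger=\Delta_{r_d}$. It is invertible whenever $r_d\neq0$. For $d\ge2$ we have $H_d>1$, hence $r_d>0$, and the inverse is explicitly
\begin{align}
\Delta_{r}^{-1}(X)=\frac{1}{r}\left[X-(1-r)\frac{\tr X}{d}\mbb 1\right],
\end{align}
which one checks by direct substitution. We also confirm that $\Pi_{\rm Haar}$ is a parent POVM: positivity is clear, and $\int d\op{\psi}{\psi}d\mu_H(\psi)=\mbb 1$ holds by unitary invariance. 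The proposition therefore applies with measure space $(\Lambda,\nu)$ the pure states with Haar measure and $\Pi_\lambda=d\op{\psi}{\psi}$.

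Next, we evaluate the integrand in Eq.~\eqref{eq:orthogonality-normal-form}:
\begin{align}
\tr\left[-\Delta_{r_d}^{-1}(P_a)\,d\op{\psi}{\psi}\right]=\frac{d}{r_d}\left[\frac{1-r_d}{d}\tr(P_a)-\langle\psi|P_a|\psi\rangle\right].
\end{align}
The prefactor $d/r_d$ is a strictly positive constant independent of $a$ and $\psi$. It therefore commutes with $\max_a$ and with the integral, and it does not affect the sign of the left-hand side. The condition ``$\ge0$'' in Proposition~\ref{prop:orthogonality-normal-form} is thus exactly Eq.~\eqref{eq:haar-positive-form}, quantified over the same class of families $\{P_a\}$ admitting an orthogonal POVM $\{Q_a\}$. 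Both directions of the equivalence are inherited directly.

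The only real checkpoint is the invertibility and positivity of $r_d$, which guarantees that dividing by $r_d$ neither flips the inequality nor breaks the proposition's hypothesis. This is immediate from $H_d>1$, so no step is a genuine obstacle. The mention of the PVM threshold $r_d$ only motivates the choice of $r$: the reduction itself holds verbatim for any $r>0$.
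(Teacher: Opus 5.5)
Your proposal is correct and matches the paper's proof essentially step for step: specialize Proposition~\ref{prop:orthogonality-normal-form} to $\mc N=\Delta_{r_d}$ with the Haar parent, invert the depolarizing map explicitly, and pull the strictly positive constant $d/r_d$ out of the integrand. One small correction to your closing aside: the reduction holds for $0<r\le 1$ rather than for every $r>0$, since $\Delta_r$ must be a CPTP map, as Proposition~\ref{prop:orthogonality-normal-form} assumes.
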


For $d=2$, outcome-wise orthogonality forces $P_a$ to be rank one, reducing the problem to a special case of Eq.~\eqref{eq:weighted-ellipsoidal-average}; the spherical averaging lemma above is then the step special to qubits. For $d>2$, the operators $P_a$ retain nontrivial spectral structure, and the resulting geometric inequality remains open.

\textit{Conclusion--} We have proved that PVM- and POVM-incompatibility breaking coincide for every unital qubit channel. Starting from the parent POVM associated with the known PVM criterion, support-function duality shows that the same parent simulates all noisy POVMs at the same noise threshold. For nonunital qubit channels, a generally asymmetric parent yields an explicit sufficient incompatibility-breaking condition. Through the steering--joint-measurability correspondence, these results give the exact POVM-steering criterion for two-qubit states with maximally mixed marginals and a sufficient unsteerability criterion for arbitrary two-qubit states.

The dual formulation also makes the remaining open problems more explicit. For nonunital qubit channels, determining the exact boundary requires a better choice of parent POVM. In higher dimensions, the Haar parent reaches the exact PVM threshold, while Proposition~\ref{prop:orthogonality-normal-form} reduces the question of whether this same parent also simulates all POVMs to the geometric inequality in Eq.~\eqref{eq:haar-positive-form}.



\textit{Acknowledgments--} Y.Z. thanks Eric Chitambar for helpful discussions. Y.Z. acknowledges the support of the Institute for Quantum Computing, the University of Waterloo, and the Perimeter Institute. Research at the Institute for Quantum Computing is supported by Innovation, Science and Economic Development Canada. Research at Perimeter Institute is supported in part by the Government of Canada through the Department of Innovation, Science and Economic Development, and by the Province of Ontario through the Ministry of Colleges and Universities. 

The author acknowledges the use of OpenAI Codex and ChatGPT for exploratory numerical examples and to assist in refining the proof and improving readability and presentation. In particular, OpenAI ChatGPT (GPT-5.5) helped in generalizing Lemma~\ref{lem:weighted-ellipsoidal-average}, initially formulated for $h(\hat{n})=\norm{D\hat{n}}$ to arbitrary even $h(\hat{n})$. The author takes full responsibility for the originality and mathematical correctness of the results.


\bibliography{ref}

\begin{thebibliography}{48}%
\makeatletter
\providecommand \@ifxundefined [1]{%
 \@ifx{#1\undefined}
}%
\providecommand \@ifnum [1]{%
 \ifnum #1\expandafter \@firstoftwo
 \else \expandafter \@secondoftwo
 \fi
}%
\providecommand \@ifx [1]{%
 \ifx #1\expandafter \@firstoftwo
 \else \expandafter \@secondoftwo
 \fi
}%
\providecommand \natexlab [1]{#1}%
\providecommand \enquote  [1]{``#1''}%
\providecommand \bibnamefont  [1]{#1}%
\providecommand \bibfnamefont [1]{#1}%
\providecommand \citenamefont [1]{#1}%
\providecommand \href@noop [0]{\@secondoftwo}%
\providecommand \href [0]{\begingroup \@sanitize@url \@href}%
\providecommand \@href[1]{\@@startlink{#1}\@@href}%
\providecommand \@@href[1]{\endgroup#1\@@endlink}%
\providecommand \@sanitize@url [0]{\catcode `\\12\catcode `\$12\catcode `\&12\catcode `\#12\catcode `\^12\catcode `\_12\catcode `\%12\relax}%
\providecommand \@@startlink[1]{}%
\providecommand \@@endlink[0]{}%
\providecommand \url  [0]{\begingroup\@sanitize@url \@url }%
\providecommand \@url [1]{\endgroup\@href {#1}{\urlprefix }}%
\providecommand \urlprefix  [0]{URL }%
\providecommand \Eprint [0]{\href }%
\providecommand \doibase [0]{https://doi.org/}%
\providecommand \selectlanguage [0]{\@gobble}%
\providecommand \bibinfo  [0]{\@secondoftwo}%
\providecommand \bibfield  [0]{\@secondoftwo}%
\providecommand \translation [1]{[#1]}%
\providecommand \BibitemOpen [0]{}%
\providecommand \bibitemStop [0]{}%
\providecommand \bibitemNoStop [0]{.\EOS\space}%
\providecommand \EOS [0]{\spacefactor3000\relax}%
\providecommand \BibitemShut  [1]{\csname bibitem#1\endcsname}%
\let\auto@bib@innerbib\@empty
\bibitem [{\citenamefont {Heinosaari}\ \emph {et~al.}(2016)\citenamefont {Heinosaari}, \citenamefont {Miyadera},\ and\ \citenamefont {Ziman}}]{Heinosaari2016}%
  \BibitemOpen
  \bibfield  {author} {\bibinfo {author} {\bibfnamefont {T.}~\bibnamefont {Heinosaari}}, \bibinfo {author} {\bibfnamefont {T.}~\bibnamefont {Miyadera}},\ and\ \bibinfo {author} {\bibfnamefont {M.}~\bibnamefont {Ziman}},\ }\bibfield  {title} {\bibinfo {title} {An invitation to quantum incompatibility},\ }\href {https://doi.org/10.1088/1751-8113/49/12/123001} {\bibfield  {journal} {\bibinfo  {journal} {Journal of Physics A: Mathematical and Theoretical}\ }\textbf {\bibinfo {volume} {49}},\ \bibinfo {pages} {123001} (\bibinfo {year} {2016})}\BibitemShut {NoStop}%
\bibitem [{\citenamefont {G{\"u}hne}\ \emph {et~al.}(2023)\citenamefont {G{\"u}hne}, \citenamefont {Haapasalo}, \citenamefont {Kraft}, \citenamefont {Pellonp{\"a}{\"a}},\ and\ \citenamefont {Uola}}]{Guhne2023}%
  \BibitemOpen
  \bibfield  {author} {\bibinfo {author} {\bibfnamefont {O.}~\bibnamefont {G{\"u}hne}}, \bibinfo {author} {\bibfnamefont {E.}~\bibnamefont {Haapasalo}}, \bibinfo {author} {\bibfnamefont {T.}~\bibnamefont {Kraft}}, \bibinfo {author} {\bibfnamefont {J.-P.}\ \bibnamefont {Pellonp{\"a}{\"a}}},\ and\ \bibinfo {author} {\bibfnamefont {R.}~\bibnamefont {Uola}},\ }\bibfield  {title} {\bibinfo {title} {Colloquium: Incompatible measurements in quantum information science},\ }\href {https://doi.org/10.1103/RevModPhys.95.011003} {\bibfield  {journal} {\bibinfo  {journal} {Rev. Mod. Phys.}\ }\textbf {\bibinfo {volume} {95}},\ \bibinfo {pages} {011003} (\bibinfo {year} {2023})}\BibitemShut {NoStop}%
\bibitem [{\citenamefont {Quintino}\ \emph {et~al.}(2015)\citenamefont {Quintino}, \citenamefont {V\'ertesi}, \citenamefont {Cavalcanti}, \citenamefont {Augusiak}, \citenamefont {Demianowicz}, \citenamefont {Ac\'{\i}n},\ and\ \citenamefont {Brunner}}]{Quintino2015}%
  \BibitemOpen
  \bibfield  {author} {\bibinfo {author} {\bibfnamefont {M.~T.}\ \bibnamefont {Quintino}}, \bibinfo {author} {\bibfnamefont {T.}~\bibnamefont {V\'ertesi}}, \bibinfo {author} {\bibfnamefont {D.}~\bibnamefont {Cavalcanti}}, \bibinfo {author} {\bibfnamefont {R.}~\bibnamefont {Augusiak}}, \bibinfo {author} {\bibfnamefont {M.}~\bibnamefont {Demianowicz}}, \bibinfo {author} {\bibfnamefont {A.}~\bibnamefont {Ac\'{\i}n}},\ and\ \bibinfo {author} {\bibfnamefont {N.}~\bibnamefont {Brunner}},\ }\bibfield  {title} {\bibinfo {title} {Inequivalence of entanglement, steering, and bell nonlocality for general measurements},\ }\href {https://doi.org/10.1103/PhysRevA.92.032107} {\bibfield  {journal} {\bibinfo  {journal} {Phys. Rev. A}\ }\textbf {\bibinfo {volume} {92}},\ \bibinfo {pages} {032107} (\bibinfo {year} {2015})}\BibitemShut {NoStop}%
\bibitem [{\citenamefont {Cavalcanti}\ and\ \citenamefont {Skrzypczyk}(2016{\natexlab{a}})}]{CavalcantiSkrzypczyk2016}%
  \BibitemOpen
  \bibfield  {author} {\bibinfo {author} {\bibfnamefont {D.}~\bibnamefont {Cavalcanti}}\ and\ \bibinfo {author} {\bibfnamefont {P.}~\bibnamefont {Skrzypczyk}},\ }\bibfield  {title} {\bibinfo {title} {Quantitative relations between measurement incompatibility, quantum steering, and nonlocality},\ }\href {https://doi.org/10.1103/PhysRevA.93.052112} {\bibfield  {journal} {\bibinfo  {journal} {Phys. Rev. A}\ }\textbf {\bibinfo {volume} {93}},\ \bibinfo {pages} {052112} (\bibinfo {year} {2016}{\natexlab{a}})}\BibitemShut {NoStop}%
\bibitem [{\citenamefont {Hirsch}\ \emph {et~al.}(2018)\citenamefont {Hirsch}, \citenamefont {Quintino},\ and\ \citenamefont {Brunner}}]{Hirsch2018}%
  \BibitemOpen
  \bibfield  {author} {\bibinfo {author} {\bibfnamefont {F.}~\bibnamefont {Hirsch}}, \bibinfo {author} {\bibfnamefont {M.~T.}\ \bibnamefont {Quintino}},\ and\ \bibinfo {author} {\bibfnamefont {N.}~\bibnamefont {Brunner}},\ }\bibfield  {title} {\bibinfo {title} {Quantum measurement incompatibility does not imply bell nonlocality},\ }\href {https://doi.org/10.1103/PhysRevA.97.012129} {\bibfield  {journal} {\bibinfo  {journal} {Phys. Rev. A}\ }\textbf {\bibinfo {volume} {97}},\ \bibinfo {pages} {012129} (\bibinfo {year} {2018})}\BibitemShut {NoStop}%
\bibitem [{\citenamefont {Bene}\ and\ \citenamefont {V{\'e}rtesi}(2018)}]{BeneVertesi2018}%
  \BibitemOpen
  \bibfield  {author} {\bibinfo {author} {\bibfnamefont {E.}~\bibnamefont {Bene}}\ and\ \bibinfo {author} {\bibfnamefont {T.}~\bibnamefont {V{\'e}rtesi}},\ }\bibfield  {title} {\bibinfo {title} {Measurement incompatibility does not give rise to bell violation in general},\ }\href {https://doi.org/10.1088/1367-2630/aa9ca3} {\bibfield  {journal} {\bibinfo  {journal} {New Journal of Physics}\ }\textbf {\bibinfo {volume} {20}},\ \bibinfo {pages} {013021} (\bibinfo {year} {2018})}\BibitemShut {NoStop}%
\bibitem [{\citenamefont {Wolf}\ \emph {et~al.}(2009)\citenamefont {Wolf}, \citenamefont {Perez-Garcia},\ and\ \citenamefont {Fernandez}}]{Wolf2009}%
  \BibitemOpen
  \bibfield  {author} {\bibinfo {author} {\bibfnamefont {M.~M.}\ \bibnamefont {Wolf}}, \bibinfo {author} {\bibfnamefont {D.}~\bibnamefont {Perez-Garcia}},\ and\ \bibinfo {author} {\bibfnamefont {C.}~\bibnamefont {Fernandez}},\ }\bibfield  {title} {\bibinfo {title} {Measurements incompatible in quantum theory cannot be measured jointly in any other no-signaling theory},\ }\href {https://doi.org/10.1103/PhysRevLett.103.230402} {\bibfield  {journal} {\bibinfo  {journal} {Phys. Rev. Lett.}\ }\textbf {\bibinfo {volume} {103}},\ \bibinfo {pages} {230402} (\bibinfo {year} {2009})}\BibitemShut {NoStop}%
\bibitem [{\citenamefont {Pl{\'a}vala}\ \emph {et~al.}(2025)\citenamefont {Pl{\'a}vala}, \citenamefont {G{\"u}hne},\ and\ \citenamefont {Quintino}}]{Plavala2025}%
  \BibitemOpen
  \bibfield  {author} {\bibinfo {author} {\bibfnamefont {M.}~\bibnamefont {Pl{\'a}vala}}, \bibinfo {author} {\bibfnamefont {O.}~\bibnamefont {G{\"u}hne}},\ and\ \bibinfo {author} {\bibfnamefont {M.~T.}\ \bibnamefont {Quintino}},\ }\bibfield  {title} {\bibinfo {title} {All incompatible measurements on qubits lead to multiparticle bell nonlocality},\ }\href {https://doi.org/10.1103/PhysRevLett.134.200201} {\bibfield  {journal} {\bibinfo  {journal} {Phys. Rev. Lett.}\ }\textbf {\bibinfo {volume} {134}},\ \bibinfo {pages} {200201} (\bibinfo {year} {2025})}\BibitemShut {NoStop}%
\bibitem [{\citenamefont {Uola}\ \emph {et~al.}(2020)\citenamefont {Uola}, \citenamefont {Costa}, \citenamefont {Nguyen},\ and\ \citenamefont {G\"uhne}}]{Uola2020}%
  \BibitemOpen
  \bibfield  {author} {\bibinfo {author} {\bibfnamefont {R.}~\bibnamefont {Uola}}, \bibinfo {author} {\bibfnamefont {A.~C.~S.}\ \bibnamefont {Costa}}, \bibinfo {author} {\bibfnamefont {H.~C.}\ \bibnamefont {Nguyen}},\ and\ \bibinfo {author} {\bibfnamefont {O.}~\bibnamefont {G\"uhne}},\ }\bibfield  {title} {\bibinfo {title} {Quantum steering},\ }\href {https://doi.org/10.1103/RevModPhys.92.015001} {\bibfield  {journal} {\bibinfo  {journal} {Rev. Mod. Phys.}\ }\textbf {\bibinfo {volume} {92}},\ \bibinfo {pages} {015001} (\bibinfo {year} {2020})}\BibitemShut {NoStop}%
\bibitem [{\citenamefont {Quintino}\ \emph {et~al.}(2014)\citenamefont {Quintino}, \citenamefont {V\'ertesi},\ and\ \citenamefont {Brunner}}]{Quintino2014}%
  \BibitemOpen
  \bibfield  {author} {\bibinfo {author} {\bibfnamefont {M.~T.}\ \bibnamefont {Quintino}}, \bibinfo {author} {\bibfnamefont {T.}~\bibnamefont {V\'ertesi}},\ and\ \bibinfo {author} {\bibfnamefont {N.}~\bibnamefont {Brunner}},\ }\bibfield  {title} {\bibinfo {title} {Joint measurability, einstein-podolsky-rosen steering, and bell nonlocality},\ }\href {https://doi.org/10.1103/PhysRevLett.113.160402} {\bibfield  {journal} {\bibinfo  {journal} {Phys. Rev. Lett.}\ }\textbf {\bibinfo {volume} {113}},\ \bibinfo {pages} {160402} (\bibinfo {year} {2014})}\BibitemShut {NoStop}%
\bibitem [{\citenamefont {Uola}\ \emph {et~al.}(2014)\citenamefont {Uola}, \citenamefont {Moroder},\ and\ \citenamefont {G\"uhne}}]{Uola2014}%
  \BibitemOpen
  \bibfield  {author} {\bibinfo {author} {\bibfnamefont {R.}~\bibnamefont {Uola}}, \bibinfo {author} {\bibfnamefont {T.}~\bibnamefont {Moroder}},\ and\ \bibinfo {author} {\bibfnamefont {O.}~\bibnamefont {G\"uhne}},\ }\bibfield  {title} {\bibinfo {title} {Joint measurability of generalized measurements implies classicality},\ }\href {https://doi.org/10.1103/PhysRevLett.113.160403} {\bibfield  {journal} {\bibinfo  {journal} {Phys. Rev. Lett.}\ }\textbf {\bibinfo {volume} {113}},\ \bibinfo {pages} {160403} (\bibinfo {year} {2014})}\BibitemShut {NoStop}%
\bibitem [{\citenamefont {Uola}\ \emph {et~al.}(2015)\citenamefont {Uola}, \citenamefont {Budroni}, \citenamefont {G\"uhne},\ and\ \citenamefont {Pellonp\"a\"a}}]{Uola2015}%
  \BibitemOpen
  \bibfield  {author} {\bibinfo {author} {\bibfnamefont {R.}~\bibnamefont {Uola}}, \bibinfo {author} {\bibfnamefont {C.}~\bibnamefont {Budroni}}, \bibinfo {author} {\bibfnamefont {O.}~\bibnamefont {G\"uhne}},\ and\ \bibinfo {author} {\bibfnamefont {J.-P.}\ \bibnamefont {Pellonp\"a\"a}},\ }\bibfield  {title} {\bibinfo {title} {One-to-one mapping between steering and joint measurability problems},\ }\href {https://doi.org/10.1103/PhysRevLett.115.230402} {\bibfield  {journal} {\bibinfo  {journal} {Phys. Rev. Lett.}\ }\textbf {\bibinfo {volume} {115}},\ \bibinfo {pages} {230402} (\bibinfo {year} {2015})}\BibitemShut {NoStop}%
\bibitem [{\citenamefont {Cavalcanti}\ and\ \citenamefont {Skrzypczyk}(2016{\natexlab{b}})}]{Cavalcanti2016}%
  \BibitemOpen
  \bibfield  {author} {\bibinfo {author} {\bibfnamefont {D.}~\bibnamefont {Cavalcanti}}\ and\ \bibinfo {author} {\bibfnamefont {P.}~\bibnamefont {Skrzypczyk}},\ }\bibfield  {title} {\bibinfo {title} {Quantum steering: a review with focus on semidefinite programming},\ }\href {https://doi.org/10.1088/1361-6633/80/2/024001} {\bibfield  {journal} {\bibinfo  {journal} {Reports on Progress in Physics}\ }\textbf {\bibinfo {volume} {80}},\ \bibinfo {pages} {024001} (\bibinfo {year} {2016}{\natexlab{b}})}\BibitemShut {NoStop}%
\bibitem [{\citenamefont {Kiukas}\ \emph {et~al.}(2017)\citenamefont {Kiukas}, \citenamefont {Budroni}, \citenamefont {Uola},\ and\ \citenamefont {Pellonp\"a\"a}}]{Kiukas2017}%
  \BibitemOpen
  \bibfield  {author} {\bibinfo {author} {\bibfnamefont {J.}~\bibnamefont {Kiukas}}, \bibinfo {author} {\bibfnamefont {C.}~\bibnamefont {Budroni}}, \bibinfo {author} {\bibfnamefont {R.}~\bibnamefont {Uola}},\ and\ \bibinfo {author} {\bibfnamefont {J.-P.}\ \bibnamefont {Pellonp\"a\"a}},\ }\bibfield  {title} {\bibinfo {title} {Continuous-variable steering and incompatibility via state-channel duality},\ }\href {https://doi.org/10.1103/PhysRevA.96.042331} {\bibfield  {journal} {\bibinfo  {journal} {Phys. Rev. A}\ }\textbf {\bibinfo {volume} {96}},\ \bibinfo {pages} {042331} (\bibinfo {year} {2017})}\BibitemShut {NoStop}%
\bibitem [{\citenamefont {Spekkens}(2005)}]{Spekkens2005}%
  \BibitemOpen
  \bibfield  {author} {\bibinfo {author} {\bibfnamefont {R.~W.}\ \bibnamefont {Spekkens}},\ }\bibfield  {title} {\bibinfo {title} {Contextuality for preparations, transformations, and unsharp measurements},\ }\href {https://doi.org/10.1103/PhysRevA.71.052108} {\bibfield  {journal} {\bibinfo  {journal} {Phys. Rev. A}\ }\textbf {\bibinfo {volume} {71}},\ \bibinfo {pages} {052108} (\bibinfo {year} {2005})}\BibitemShut {NoStop}%
\bibitem [{\citenamefont {Tavakoli}\ and\ \citenamefont {Uola}(2020)}]{TavakoliUola2020}%
  \BibitemOpen
  \bibfield  {author} {\bibinfo {author} {\bibfnamefont {A.}~\bibnamefont {Tavakoli}}\ and\ \bibinfo {author} {\bibfnamefont {R.}~\bibnamefont {Uola}},\ }\bibfield  {title} {\bibinfo {title} {Measurement incompatibility and steering are necessary and sufficient for operational contextuality},\ }\href {https://doi.org/10.1103/PhysRevResearch.2.013011} {\bibfield  {journal} {\bibinfo  {journal} {Phys. Rev. Research}\ }\textbf {\bibinfo {volume} {2}},\ \bibinfo {pages} {013011} (\bibinfo {year} {2020})}\BibitemShut {NoStop}%
\bibitem [{\citenamefont {Selby}\ \emph {et~al.}(2023)\citenamefont {Selby}, \citenamefont {Schmid}, \citenamefont {Wolfe}, \citenamefont {Sainz}, \citenamefont {Kunjwal},\ and\ \citenamefont {Spekkens}}]{Selby2023}%
  \BibitemOpen
  \bibfield  {author} {\bibinfo {author} {\bibfnamefont {J.~H.}\ \bibnamefont {Selby}}, \bibinfo {author} {\bibfnamefont {D.}~\bibnamefont {Schmid}}, \bibinfo {author} {\bibfnamefont {E.}~\bibnamefont {Wolfe}}, \bibinfo {author} {\bibfnamefont {A.~B.}\ \bibnamefont {Sainz}}, \bibinfo {author} {\bibfnamefont {R.}~\bibnamefont {Kunjwal}},\ and\ \bibinfo {author} {\bibfnamefont {R.~W.}\ \bibnamefont {Spekkens}},\ }\bibfield  {title} {\bibinfo {title} {Contextuality without incompatibility},\ }\href {https://doi.org/10.1103/PhysRevLett.130.230201} {\bibfield  {journal} {\bibinfo  {journal} {Phys. Rev. Lett.}\ }\textbf {\bibinfo {volume} {130}},\ \bibinfo {pages} {230201} (\bibinfo {year} {2023})}\BibitemShut {NoStop}%
\bibitem [{\citenamefont {Zhang}\ \emph {et~al.}(2026)\citenamefont {Zhang}, \citenamefont {Schmid}, \citenamefont {Y{\=i}ng},\ and\ \citenamefont {Spekkens}}]{ZhangSchmidYingSpekkens2026}%
  \BibitemOpen
  \bibfield  {author} {\bibinfo {author} {\bibfnamefont {Y.}~\bibnamefont {Zhang}}, \bibinfo {author} {\bibfnamefont {D.}~\bibnamefont {Schmid}}, \bibinfo {author} {\bibfnamefont {Y.}~\bibnamefont {Y{\=i}ng}},\ and\ \bibinfo {author} {\bibfnamefont {R.~W.}\ \bibnamefont {Spekkens}},\ }\bibfield  {title} {\bibinfo {title} {Reassessing the boundary between classical and nonclassical for individual quantum processes},\ }\href {https://doi.org/10.1103/vqfz-wzjg} {\bibfield  {journal} {\bibinfo  {journal} {Phys. Rev. X}\ }\textbf {\bibinfo {volume} {16}},\ \bibinfo {pages} {021050} (\bibinfo {year} {2026})}\BibitemShut {NoStop}%
\bibitem [{\citenamefont {Zhang}\ \emph {et~al.}(2025{\natexlab{a}})\citenamefont {Zhang}, \citenamefont {Y{\=i}ng},\ and\ \citenamefont {Schmid}}]{ZhangYingSchmid2025}%
  \BibitemOpen
  \bibfield  {author} {\bibinfo {author} {\bibfnamefont {Y.}~\bibnamefont {Zhang}}, \bibinfo {author} {\bibfnamefont {Y.}~\bibnamefont {Y{\=i}ng}},\ and\ \bibinfo {author} {\bibfnamefont {D.}~\bibnamefont {Schmid}},\ }\href {https://doi.org/10.48550/arXiv.2504.02944} {\bibinfo {title} {Quantifiers and witnesses for the nonclassicality of measurements and of states}} (\bibinfo {year} {2025}{\natexlab{a}}),\ \Eprint {https://arxiv.org/abs/2504.02944} {arXiv:2504.02944 [quant-ph]} \BibitemShut {NoStop}%
\bibitem [{\citenamefont {Carmeli}\ \emph {et~al.}(2019)\citenamefont {Carmeli}, \citenamefont {Heinosaari},\ and\ \citenamefont {Toigo}}]{Carmeli2019a}%
  \BibitemOpen
  \bibfield  {author} {\bibinfo {author} {\bibfnamefont {C.}~\bibnamefont {Carmeli}}, \bibinfo {author} {\bibfnamefont {T.}~\bibnamefont {Heinosaari}},\ and\ \bibinfo {author} {\bibfnamefont {A.}~\bibnamefont {Toigo}},\ }\bibfield  {title} {\bibinfo {title} {Quantum incompatibility witnesses},\ }\href {https://doi.org/10.1103/PhysRevLett.122.130402} {\bibfield  {journal} {\bibinfo  {journal} {Phys. Rev. Lett.}\ }\textbf {\bibinfo {volume} {122}},\ \bibinfo {pages} {130402} (\bibinfo {year} {2019})}\BibitemShut {NoStop}%
\bibitem [{\citenamefont {Skrzypczyk}\ \emph {et~al.}(2019)\citenamefont {Skrzypczyk}, \citenamefont {\ifmmode \check{S}\else \v{S}\fi{}upi\ifmmode~\acute{c}\else \'{c}\fi{}},\ and\ \citenamefont {Cavalcanti}}]{Skrzypczyk2019}%
  \BibitemOpen
  \bibfield  {author} {\bibinfo {author} {\bibfnamefont {P.}~\bibnamefont {Skrzypczyk}}, \bibinfo {author} {\bibfnamefont {I.}~\bibnamefont {\ifmmode \check{S}\else \v{S}\fi{}upi\ifmmode~\acute{c}\else \'{c}\fi{}}},\ and\ \bibinfo {author} {\bibfnamefont {D.}~\bibnamefont {Cavalcanti}},\ }\bibfield  {title} {\bibinfo {title} {All sets of incompatible measurements give an advantage in quantum state discrimination},\ }\href {https://doi.org/10.1103/PhysRevLett.122.130403} {\bibfield  {journal} {\bibinfo  {journal} {Phys. Rev. Lett.}\ }\textbf {\bibinfo {volume} {122}},\ \bibinfo {pages} {130403} (\bibinfo {year} {2019})}\BibitemShut {NoStop}%
\bibitem [{\citenamefont {Buscemi}\ \emph {et~al.}(2020)\citenamefont {Buscemi}, \citenamefont {Chitambar},\ and\ \citenamefont {Zhou}}]{Buscemi2020}%
  \BibitemOpen
  \bibfield  {author} {\bibinfo {author} {\bibfnamefont {F.}~\bibnamefont {Buscemi}}, \bibinfo {author} {\bibfnamefont {E.}~\bibnamefont {Chitambar}},\ and\ \bibinfo {author} {\bibfnamefont {W.}~\bibnamefont {Zhou}},\ }\bibfield  {title} {\bibinfo {title} {Complete resource theory of quantum incompatibility as quantum programmability},\ }\href {https://doi.org/10.1103/PhysRevLett.124.120401} {\bibfield  {journal} {\bibinfo  {journal} {Phys. Rev. Lett.}\ }\textbf {\bibinfo {volume} {124}},\ \bibinfo {pages} {120401} (\bibinfo {year} {2020})}\BibitemShut {NoStop}%
\bibitem [{\citenamefont {Heinosaari}\ \emph {et~al.}(2015)\citenamefont {Heinosaari}, \citenamefont {Kiukas}, \citenamefont {Reitzner},\ and\ \citenamefont {Schultz}}]{Heinosaari2015}%
  \BibitemOpen
  \bibfield  {author} {\bibinfo {author} {\bibfnamefont {T.}~\bibnamefont {Heinosaari}}, \bibinfo {author} {\bibfnamefont {J.}~\bibnamefont {Kiukas}}, \bibinfo {author} {\bibfnamefont {D.}~\bibnamefont {Reitzner}},\ and\ \bibinfo {author} {\bibfnamefont {J.}~\bibnamefont {Schultz}},\ }\bibfield  {title} {\bibinfo {title} {Incompatibility breaking quantum channels},\ }\href {https://doi.org/10.1088/1751-8113/48/43/435301} {\bibfield  {journal} {\bibinfo  {journal} {Journal of Physics A: Mathematical and Theoretical}\ }\textbf {\bibinfo {volume} {48}},\ \bibinfo {pages} {435301} (\bibinfo {year} {2015})}\BibitemShut {NoStop}%
\bibitem [{\citenamefont {Zhang}\ and\ \citenamefont {Chitambar}(2026)}]{zhang2026parallel}%
  \BibitemOpen
  \bibfield  {author} {\bibinfo {author} {\bibfnamefont {Y.}~\bibnamefont {Zhang}}\ and\ \bibinfo {author} {\bibfnamefont {E.}~\bibnamefont {Chitambar}},\ }\bibfield  {title} {\bibinfo {title} {Destroying and preserving measurement incompatibility over a quantum channel}} (\bibinfo {year} {2026})\BibitemShut {NoStop}%
\bibitem [{\citenamefont {Peres}(1988)}]{Peres1988}%
  \BibitemOpen
  \bibfield  {author} {\bibinfo {author} {\bibfnamefont {A.}~\bibnamefont {Peres}},\ }\bibfield  {title} {\bibinfo {title} {How to differentiate between non-orthogonal states},\ }\href {https://doi.org/10.1016/0375-9601(88)91034-1} {\bibfield  {journal} {\bibinfo  {journal} {Physics Letters A}\ }\textbf {\bibinfo {volume} {128}},\ \bibinfo {pages} {19} (\bibinfo {year} {1988})}\BibitemShut {NoStop}%
\bibitem [{\citenamefont {Oszmaniec}\ and\ \citenamefont {Biswas}(2019)}]{OszmaniecBiswas2019}%
  \BibitemOpen
  \bibfield  {author} {\bibinfo {author} {\bibfnamefont {M.}~\bibnamefont {Oszmaniec}}\ and\ \bibinfo {author} {\bibfnamefont {T.}~\bibnamefont {Biswas}},\ }\bibfield  {title} {\bibinfo {title} {Operational relevance of resource theories of quantum measurements},\ }\href {https://doi.org/10.22331/q-2019-04-26-133} {\bibfield  {journal} {\bibinfo  {journal} {Quantum}\ }\textbf {\bibinfo {volume} {3}},\ \bibinfo {pages} {133} (\bibinfo {year} {2019})},\ \Eprint {https://arxiv.org/abs/1901.08566} {arXiv:1901.08566 [quant-ph]} \BibitemShut {NoStop}%
\bibitem [{\citenamefont {Massar}\ and\ \citenamefont {Popescu}(1995)}]{MassarPopescu1995}%
  \BibitemOpen
  \bibfield  {author} {\bibinfo {author} {\bibfnamefont {S.}~\bibnamefont {Massar}}\ and\ \bibinfo {author} {\bibfnamefont {S.}~\bibnamefont {Popescu}},\ }\bibfield  {title} {\bibinfo {title} {Optimal extraction of information from finite quantum ensembles},\ }\href {https://doi.org/10.1103/PhysRevLett.74.1259} {\bibfield  {journal} {\bibinfo  {journal} {Physical Review Letters}\ }\textbf {\bibinfo {volume} {74}},\ \bibinfo {pages} {1259} (\bibinfo {year} {1995})}\BibitemShut {NoStop}%
\bibitem [{\citenamefont {V{\'e}rtesi}\ and\ \citenamefont {Bene}(2010)}]{VertesiBene2010}%
  \BibitemOpen
  \bibfield  {author} {\bibinfo {author} {\bibfnamefont {T.}~\bibnamefont {V{\'e}rtesi}}\ and\ \bibinfo {author} {\bibfnamefont {E.}~\bibnamefont {Bene}},\ }\bibfield  {title} {\bibinfo {title} {Two-qubit bell inequality for which positive operator-valued measurements are relevant},\ }\href {https://doi.org/10.1103/PhysRevA.82.062115} {\bibfield  {journal} {\bibinfo  {journal} {Physical Review A}\ }\textbf {\bibinfo {volume} {82}},\ \bibinfo {pages} {062115} (\bibinfo {year} {2010})},\ \Eprint {https://arxiv.org/abs/1007.2578} {arXiv:1007.2578 [quant-ph]} \BibitemShut {NoStop}%
\bibitem [{\citenamefont {Werner}(2014)}]{Werner2014}%
  \BibitemOpen
  \bibfield  {author} {\bibinfo {author} {\bibfnamefont {R.~F.}\ \bibnamefont {Werner}},\ }\bibfield  {title} {\bibinfo {title} {Steering, or maybe why einstein did not go all the way to bell's argument},\ }\href {https://doi.org/10.1088/1751-8113/47/42/424008} {\bibfield  {journal} {\bibinfo  {journal} {Journal of Physics A: Mathematical and Theoretical}\ }\textbf {\bibinfo {volume} {47}},\ \bibinfo {pages} {424008} (\bibinfo {year} {2014})}\BibitemShut {NoStop}%
\bibitem [{\citenamefont {Werner}(1989)}]{Werner1989}%
  \BibitemOpen
  \bibfield  {author} {\bibinfo {author} {\bibfnamefont {R.~F.}\ \bibnamefont {Werner}},\ }\bibfield  {title} {\bibinfo {title} {Quantum states with einstein-podolsky-rosen correlations admitting a hidden-variable model},\ }\href {https://doi.org/10.1103/PhysRevA.40.4277} {\bibfield  {journal} {\bibinfo  {journal} {Phys. Rev. A}\ }\textbf {\bibinfo {volume} {40}},\ \bibinfo {pages} {4277} (\bibinfo {year} {1989})}\BibitemShut {NoStop}%
\bibitem [{\citenamefont {Barrett}(2002)}]{Barrett2002}%
  \BibitemOpen
  \bibfield  {author} {\bibinfo {author} {\bibfnamefont {J.}~\bibnamefont {Barrett}},\ }\bibfield  {title} {\bibinfo {title} {Nonsequential positive-operator-valued measurements on entangled mixed states do not always violate a bell inequality},\ }\href {https://doi.org/10.1103/PhysRevA.65.042302} {\bibfield  {journal} {\bibinfo  {journal} {Phys. Rev. A}\ }\textbf {\bibinfo {volume} {65}},\ \bibinfo {pages} {042302} (\bibinfo {year} {2002})}\BibitemShut {NoStop}%
\bibitem [{\citenamefont {Zhang}\ and\ \citenamefont {Chitambar}(2024)}]{Zhang2024}%
  \BibitemOpen
  \bibfield  {author} {\bibinfo {author} {\bibfnamefont {Y.}~\bibnamefont {Zhang}}\ and\ \bibinfo {author} {\bibfnamefont {E.}~\bibnamefont {Chitambar}},\ }\bibfield  {title} {\bibinfo {title} {Exact steering bound for two-qubit werner states},\ }\href {https://doi.org/10.1103/PhysRevLett.132.250201} {\bibfield  {journal} {\bibinfo  {journal} {Phys. Rev. Lett.}\ }\textbf {\bibinfo {volume} {132}},\ \bibinfo {pages} {250201} (\bibinfo {year} {2024})}\BibitemShut {NoStop}%
\bibitem [{\citenamefont {Renner}(2024)}]{Renner2024}%
  \BibitemOpen
  \bibfield  {author} {\bibinfo {author} {\bibfnamefont {M.~J.}\ \bibnamefont {Renner}},\ }\bibfield  {title} {\bibinfo {title} {Compatibility of generalized noisy qubit measurements},\ }\href {https://doi.org/10.1103/PhysRevLett.132.250202} {\bibfield  {journal} {\bibinfo  {journal} {Phys. Rev. Lett.}\ }\textbf {\bibinfo {volume} {132}},\ \bibinfo {pages} {250202} (\bibinfo {year} {2024})}\BibitemShut {NoStop}%
\bibitem [{\citenamefont {Nguyen}\ \emph {et~al.}(2018)\citenamefont {Nguyen}, \citenamefont {Milne}, \citenamefont {Vu},\ and\ \citenamefont {Jevtic}}]{Nguyen2018}%
  \BibitemOpen
  \bibfield  {author} {\bibinfo {author} {\bibfnamefont {H.~C.}\ \bibnamefont {Nguyen}}, \bibinfo {author} {\bibfnamefont {A.}~\bibnamefont {Milne}}, \bibinfo {author} {\bibfnamefont {T.}~\bibnamefont {Vu}},\ and\ \bibinfo {author} {\bibfnamefont {S.}~\bibnamefont {Jevtic}},\ }\bibfield  {title} {\bibinfo {title} {Quantum steering with positive operator valued measures},\ }\href {https://doi.org/10.1088/1751-8121/aad115} {\bibfield  {journal} {\bibinfo  {journal} {Journal of Physics A: Mathematical and Theoretical}\ }\textbf {\bibinfo {volume} {51}},\ \bibinfo {pages} {355302} (\bibinfo {year} {2018})}\BibitemShut {NoStop}%
\bibitem [{\citenamefont {Nguyen}\ \emph {et~al.}(2019)\citenamefont {Nguyen}, \citenamefont {Nguyen},\ and\ \citenamefont {G\"uhne}}]{Nguyen2019}%
  \BibitemOpen
  \bibfield  {author} {\bibinfo {author} {\bibfnamefont {H.~C.}\ \bibnamefont {Nguyen}}, \bibinfo {author} {\bibfnamefont {H.-V.}\ \bibnamefont {Nguyen}},\ and\ \bibinfo {author} {\bibfnamefont {O.}~\bibnamefont {G\"uhne}},\ }\bibfield  {title} {\bibinfo {title} {Geometry of einstein-podolsky-rosen correlations},\ }\href {https://doi.org/10.1103/PhysRevLett.122.240401} {\bibfield  {journal} {\bibinfo  {journal} {Phys. Rev. Lett.}\ }\textbf {\bibinfo {volume} {122}},\ \bibinfo {pages} {240401} (\bibinfo {year} {2019})}\BibitemShut {NoStop}%
\bibitem [{\citenamefont {Zhang}\ \emph {et~al.}(2025{\natexlab{b}})\citenamefont {Zhang}, \citenamefont {Zhang},\ and\ \citenamefont {Chitambar}}]{ZhangZhangChitambar2025}%
  \BibitemOpen
  \bibfield  {author} {\bibinfo {author} {\bibfnamefont {Y.}~\bibnamefont {Zhang}}, \bibinfo {author} {\bibfnamefont {J.}~\bibnamefont {Zhang}},\ and\ \bibinfo {author} {\bibfnamefont {E.}~\bibnamefont {Chitambar}},\ }\bibfield  {title} {\bibinfo {title} {Cost of simulating entanglement in steering scenarios},\ }\href@noop {} {\bibfield  {journal} {\bibinfo  {journal} {Quantum}\ }\textbf {\bibinfo {volume} {9}},\ \bibinfo {pages} {1902} (\bibinfo {year} {2025}{\natexlab{b}})},\ \Eprint {https://arxiv.org/abs/2302.09060} {arXiv:2302.09060 [quant-ph]} \BibitemShut {NoStop}%
\bibitem [{\citenamefont {Jevtic}\ \emph {et~al.}(2015)\citenamefont {Jevtic}, \citenamefont {Hall}, \citenamefont {Anderson}, \citenamefont {Zwierz},\ and\ \citenamefont {Wiseman}}]{Jevtic2015}%
  \BibitemOpen
  \bibfield  {author} {\bibinfo {author} {\bibfnamefont {S.}~\bibnamefont {Jevtic}}, \bibinfo {author} {\bibfnamefont {M.~J.~W.}\ \bibnamefont {Hall}}, \bibinfo {author} {\bibfnamefont {M.~R.}\ \bibnamefont {Anderson}}, \bibinfo {author} {\bibfnamefont {M.}~\bibnamefont {Zwierz}},\ and\ \bibinfo {author} {\bibfnamefont {H.~M.}\ \bibnamefont {Wiseman}},\ }\bibfield  {title} {\bibinfo {title} {Einstein--podolsky--rosen steering and the steering ellipsoid},\ }\href {https://doi.org/10.1364/JOSAB.32.000A40} {\bibfield  {journal} {\bibinfo  {journal} {J. Opt. Soc. Am. B}\ }\textbf {\bibinfo {volume} {32}},\ \bibinfo {pages} {A40} (\bibinfo {year} {2015})}\BibitemShut {NoStop}%
\bibitem [{\citenamefont {Nguyen}\ and\ \citenamefont {Vu}(2016)}]{Nguyen2016}%
  \BibitemOpen
  \bibfield  {author} {\bibinfo {author} {\bibfnamefont {H.~C.}\ \bibnamefont {Nguyen}}\ and\ \bibinfo {author} {\bibfnamefont {T.}~\bibnamefont {Vu}},\ }\bibfield  {title} {\bibinfo {title} {Necessary and sufficient condition for steerability of two-qubit states by the geometry of steering outcomes},\ }\href {https://doi.org/10.1209/0295-5075/115/10003} {\bibfield  {journal} {\bibinfo  {journal} {{EPL} (Europhysics Letters)}\ }\textbf {\bibinfo {volume} {115}},\ \bibinfo {pages} {10003} (\bibinfo {year} {2016})}\BibitemShut {NoStop}%
\bibitem [{\citenamefont {Nguyen}\ and\ \citenamefont {G\"uhne}(2020)}]{Nguyen2020}%
  \BibitemOpen
  \bibfield  {author} {\bibinfo {author} {\bibfnamefont {H.~C.}\ \bibnamefont {Nguyen}}\ and\ \bibinfo {author} {\bibfnamefont {O.}~\bibnamefont {G\"uhne}},\ }\bibfield  {title} {\bibinfo {title} {Quantum steering of bell-diagonal states with generalized measurements},\ }\href {https://doi.org/10.1103/PhysRevA.101.042125} {\bibfield  {journal} {\bibinfo  {journal} {Phys. Rev. A}\ }\textbf {\bibinfo {volume} {101}},\ \bibinfo {pages} {042125} (\bibinfo {year} {2020})}\BibitemShut {NoStop}%
\bibitem [{\citenamefont {Chiribella}\ \emph {et~al.}(2007)\citenamefont {Chiribella}, \citenamefont {D'Ariano},\ and\ \citenamefont {Schlingemann}}]{Chiribella2007}%
  \BibitemOpen
  \bibfield  {author} {\bibinfo {author} {\bibfnamefont {G.}~\bibnamefont {Chiribella}}, \bibinfo {author} {\bibfnamefont {G.~M.}\ \bibnamefont {D'Ariano}},\ and\ \bibinfo {author} {\bibfnamefont {D.}~\bibnamefont {Schlingemann}},\ }\bibfield  {title} {\bibinfo {title} {How continuous quantum measurements in finite dimensions are actually discrete},\ }\href {https://doi.org/10.1103/PhysRevLett.98.190403} {\bibfield  {journal} {\bibinfo  {journal} {Phys. Rev. Lett.}\ }\textbf {\bibinfo {volume} {98}},\ \bibinfo {pages} {190403} (\bibinfo {year} {2007})}\BibitemShut {NoStop}%
\bibitem [{\citenamefont {D'Ariano}\ \emph {et~al.}(2005)\citenamefont {D'Ariano}, \citenamefont {Lo~Presti},\ and\ \citenamefont {Perinotti}}]{Ariano2005}%
  \BibitemOpen
  \bibfield  {author} {\bibinfo {author} {\bibfnamefont {G.~M.}\ \bibnamefont {D'Ariano}}, \bibinfo {author} {\bibfnamefont {P.}~\bibnamefont {Lo~Presti}},\ and\ \bibinfo {author} {\bibfnamefont {P.}~\bibnamefont {Perinotti}},\ }\bibfield  {title} {\bibinfo {title} {Classical randomness in quantum measurements},\ }\href {https://doi.org/10.1088/0305-4470/38/26/010} {\bibfield  {journal} {\bibinfo  {journal} {Journal of Physics A: Mathematical and General}\ }\textbf {\bibinfo {volume} {38}},\ \bibinfo {pages} {5979} (\bibinfo {year} {2005})}\BibitemShut {NoStop}%
\bibitem [{\citenamefont {Jevtic}\ \emph {et~al.}(2014)\citenamefont {Jevtic}, \citenamefont {Pusey}, \citenamefont {Jennings},\ and\ \citenamefont {Rudolph}}]{Sania2014}%
  \BibitemOpen
  \bibfield  {author} {\bibinfo {author} {\bibfnamefont {S.}~\bibnamefont {Jevtic}}, \bibinfo {author} {\bibfnamefont {M.}~\bibnamefont {Pusey}}, \bibinfo {author} {\bibfnamefont {D.}~\bibnamefont {Jennings}},\ and\ \bibinfo {author} {\bibfnamefont {T.}~\bibnamefont {Rudolph}},\ }\bibfield  {title} {\bibinfo {title} {Quantum steering ellipsoids},\ }\href {https://doi.org/10.1103/PhysRevLett.113.020402} {\bibfield  {journal} {\bibinfo  {journal} {Phys. Rev. Lett.}\ }\textbf {\bibinfo {volume} {113}},\ \bibinfo {pages} {020402} (\bibinfo {year} {2014})}\BibitemShut {NoStop}%
\bibitem [{\citenamefont {Wiseman}\ \emph {et~al.}(2007)\citenamefont {Wiseman}, \citenamefont {Jones},\ and\ \citenamefont {Doherty}}]{Wiseman2007}%
  \BibitemOpen
  \bibfield  {author} {\bibinfo {author} {\bibfnamefont {H.~M.}\ \bibnamefont {Wiseman}}, \bibinfo {author} {\bibfnamefont {S.~J.}\ \bibnamefont {Jones}},\ and\ \bibinfo {author} {\bibfnamefont {A.~C.}\ \bibnamefont {Doherty}},\ }\bibfield  {title} {\bibinfo {title} {Steering, entanglement, nonlocality, and the einstein-podolsky-rosen paradox},\ }\href {https://doi.org/10.1103/PhysRevLett.98.140402} {\bibfield  {journal} {\bibinfo  {journal} {Phys. Rev. Lett.}\ }\textbf {\bibinfo {volume} {98}},\ \bibinfo {pages} {140402} (\bibinfo {year} {2007})}\BibitemShut {NoStop}%
\bibitem [{\citenamefont {Almeida}\ \emph {et~al.}(2007)\citenamefont {Almeida}, \citenamefont {Pironio}, \citenamefont {Barrett}, \citenamefont {T\'oth},\ and\ \citenamefont {Ac\'{\i}n}}]{Almeida2007}%
  \BibitemOpen
  \bibfield  {author} {\bibinfo {author} {\bibfnamefont {M.~L.}\ \bibnamefont {Almeida}}, \bibinfo {author} {\bibfnamefont {S.}~\bibnamefont {Pironio}}, \bibinfo {author} {\bibfnamefont {J.}~\bibnamefont {Barrett}}, \bibinfo {author} {\bibfnamefont {G.}~\bibnamefont {T\'oth}},\ and\ \bibinfo {author} {\bibfnamefont {A.}~\bibnamefont {Ac\'{\i}n}},\ }\bibfield  {title} {\bibinfo {title} {Noise robustness of the nonlocality of entangled quantum states},\ }\href {https://doi.org/10.1103/PhysRevLett.99.040403} {\bibfield  {journal} {\bibinfo  {journal} {Phys. Rev. Lett.}\ }\textbf {\bibinfo {volume} {99}},\ \bibinfo {pages} {040403} (\bibinfo {year} {2007})}\BibitemShut {NoStop}%
\bibitem [{\citenamefont {Ioannou}\ \emph {et~al.}(2022)\citenamefont {Ioannou}, \citenamefont {Sekatski}, \citenamefont {Designolle}, \citenamefont {Jones}, \citenamefont {Uola},\ and\ \citenamefont {Brunner}}]{Ioannou2022}%
  \BibitemOpen
  \bibfield  {author} {\bibinfo {author} {\bibfnamefont {M.}~\bibnamefont {Ioannou}}, \bibinfo {author} {\bibfnamefont {P.}~\bibnamefont {Sekatski}}, \bibinfo {author} {\bibfnamefont {S.}~\bibnamefont {Designolle}}, \bibinfo {author} {\bibfnamefont {B.~D.~M.}\ \bibnamefont {Jones}}, \bibinfo {author} {\bibfnamefont {R.}~\bibnamefont {Uola}},\ and\ \bibinfo {author} {\bibfnamefont {N.}~\bibnamefont {Brunner}},\ }\bibfield  {title} {\bibinfo {title} {Simulability of high-dimensional quantum measurements},\ }\href {https://doi.org/10.1103/PhysRevLett.129.190401} {\bibfield  {journal} {\bibinfo  {journal} {Phys. Rev. Lett.}\ }\textbf {\bibinfo {volume} {129}},\ \bibinfo {pages} {190401} (\bibinfo {year} {2022})}\BibitemShut {NoStop}%
\bibitem [{\citenamefont {Fujiwara}\ and\ \citenamefont {Algoet}(1999)}]{Fujiwara1999}%
  \BibitemOpen
  \bibfield  {author} {\bibinfo {author} {\bibfnamefont {A.}~\bibnamefont {Fujiwara}}\ and\ \bibinfo {author} {\bibfnamefont {P.}~\bibnamefont {Algoet}},\ }\bibfield  {title} {\bibinfo {title} {One-to-one parametrization of quantum channels},\ }\href {https://doi.org/10.1103/PhysRevA.59.3290} {\bibfield  {journal} {\bibinfo  {journal} {Phys. Rev. A}\ }\textbf {\bibinfo {volume} {59}},\ \bibinfo {pages} {3290} (\bibinfo {year} {1999})}\BibitemShut {NoStop}%
\bibitem [{\citenamefont {Bowles}\ \emph {et~al.}(2016)\citenamefont {Bowles}, \citenamefont {Hirsch}, \citenamefont {Quintino},\ and\ \citenamefont {Brunner}}]{Bowles2016}%
  \BibitemOpen
  \bibfield  {author} {\bibinfo {author} {\bibfnamefont {J.}~\bibnamefont {Bowles}}, \bibinfo {author} {\bibfnamefont {F.}~\bibnamefont {Hirsch}}, \bibinfo {author} {\bibfnamefont {M.~T.}\ \bibnamefont {Quintino}},\ and\ \bibinfo {author} {\bibfnamefont {N.}~\bibnamefont {Brunner}},\ }\bibfield  {title} {\bibinfo {title} {Sufficient criterion for guaranteeing that a two-qubit state is unsteerable},\ }\href {https://doi.org/10.1103/PhysRevA.93.022121} {\bibfield  {journal} {\bibinfo  {journal} {Phys. Rev. A}\ }\textbf {\bibinfo {volume} {93}},\ \bibinfo {pages} {022121} (\bibinfo {year} {2016})}\BibitemShut {NoStop}%
\bibitem [{\citenamefont {Villegas-Aguilar}\ \emph {et~al.}(2024)\citenamefont {Villegas-Aguilar}, \citenamefont {Polino}, \citenamefont {Ghafari}, \citenamefont {Quintino}, \citenamefont {Laverick}, \citenamefont {Berkman}, \citenamefont {Rogge}, \citenamefont {Shalm}, \citenamefont {Tischler}, \citenamefont {Cavalcanti}, \citenamefont {Slussarenko},\ and\ \citenamefont {Pryde}}]{Aguilar2024}%
  \BibitemOpen
  \bibfield  {author} {\bibinfo {author} {\bibfnamefont {L.}~\bibnamefont {Villegas-Aguilar}}, \bibinfo {author} {\bibfnamefont {E.}~\bibnamefont {Polino}}, \bibinfo {author} {\bibfnamefont {F.}~\bibnamefont {Ghafari}}, \bibinfo {author} {\bibfnamefont {M.~T.}\ \bibnamefont {Quintino}}, \bibinfo {author} {\bibfnamefont {K.~T.}\ \bibnamefont {Laverick}}, \bibinfo {author} {\bibfnamefont {I.~R.}\ \bibnamefont {Berkman}}, \bibinfo {author} {\bibfnamefont {S.}~\bibnamefont {Rogge}}, \bibinfo {author} {\bibfnamefont {L.~K.}\ \bibnamefont {Shalm}}, \bibinfo {author} {\bibfnamefont {N.}~\bibnamefont {Tischler}}, \bibinfo {author} {\bibfnamefont {E.~G.}\ \bibnamefont {Cavalcanti}}, \bibinfo {author} {\bibfnamefont {S.}~\bibnamefont {Slussarenko}},\ and\ \bibinfo {author} {\bibfnamefont {G.~J.}\ \bibnamefont {Pryde}},\ }\bibfield  {title} {\bibinfo {title} {Nonlocality activation in a photonic quantum network},\ }\href {https://doi.org/10.1038/s41467-024-47354-w} {\bibfield  {journal} {\bibinfo  {journal} {Nature
  Communications}\ }\textbf {\bibinfo {volume} {15}},\ \bibinfo {pages} {3112} (\bibinfo {year} {2024})}\BibitemShut {NoStop}%
\end{thebibliography}%

\appendix
\section{Proof of Proposition~\ref{prop:orthogonality-normal-form}}
\label{app:orthogonality-normal-form}

For a finite Hermitian family $\{Y_a\}_{a=1}^m$, consider
\begin{align}
\sup_{\substack{M_a\ge0\\ \sum_aM_a=\mbb{1}}}
\sum_a\tr \left[\mc N(Y_a)M_a\right].
\end{align}
Introducing a Hermitian multiplier $Z$ for the constraint
$\sum_aM_a=\mbb{1}$ gives the Lagrangian
\begin{align}
L(\{M_a\},Z)&=
\sum_a\tr \left[\mc N(Y_a)M_a\right]+\tr \left[Z\left(\mbb{1}-\sum_aM_a\right)\right]\notag\\
&=\tr Z+\sum_a\tr \left[\bigl(\mc N(Y_a)-Z\bigr)M_a\right].
\end{align}
The supremum over $M_a\ge0$ is finite if and only if
\begin{align}
Z\ge\mc N(Y_a)\qquad\forall a,
\end{align}
in which case it equals $\tr Z$. Hence the dual problem is
\begin{align}
\inf_Z\left\{\tr Z:Z\ge\mc N(Y_a)\ \forall a\right\}.
\end{align}
Since $M_a=\mbb{1}/m$ is strictly feasible, Slater's condition gives
strong duality, proving Eq.~\eqref{eq:pri-dual}.

Now assume first that Eq.~\eqref{eq:orthogonality-normal-form} holds. Fix a
finite Hermitian family $\{Y_a\}_a$, and let $Z_*$ and POVM $\{Q_a\}_a$ be
optimal solutions of the dual and primal SDPs in
Eq.~\eqref{eq:pri-dual}, respectively. Define
\begin{align}
P_a:=Z_*-\mc N(Y_a).
\label{eq:general-Pa-definition}
\end{align}
Dual feasibility gives $P_a\ge\mbb{0}$. Since the primal SDP is
strictly feasible, Slater's condition ensures strong duality and the
KKT conditions. Complementary slackness gives
$\tr(P_aQ_a)=0$. For positive operators $P_a,Q_a$, this implies
\begin{align}
P_aQ_a=\mbb{0}
\qquad
\forall a.
\label{eq:general-complementary-slackness}
\end{align}

Since $\mc N$ is invertible, we can write
\begin{align}
Y_a=\mc N^{-1}(Z_*)-\mc N^{-1}(P_a).
\end{align}
The first term is independent of $a$, and hence
\begin{align}
&\int_\Lambda
\max_a\tr(Y_a\Pi_\lambda)d\nu(\lambda)\label{eq:general-exact-reduction}\\
&=\tr \left[\mc N^{-1}(Z_*)\right]+
\int_\Lambda\max_a\tr \left[-\mc N^{-1}(P_a)\Pi_\lambda\right]d\nu(\lambda).
\notag 
\end{align}
Here we used $\int_\Lambda \Pi_\lambda d\nu(\lambda)=\mbb{1}$.
Moreover, since $\mc N$ is trace preserving, its inverse is also trace
preserving, so
\begin{align}
\tr \left[\mc N^{-1}(Z_*)\right]=\tr Z_*.
\end{align}
Equation~\eqref{eq:orthogonality-normal-form} therefore implies
\begin{align}
\int_\Lambda
\max_a\tr(Y_a\Pi_\lambda)d\nu(\lambda)
\ge
\tr Z_*,
\end{align}
which is precisely Eq.~\eqref{eq:dual-charac}.

Conversely, suppose that Eq.~\eqref{eq:dual-charac} holds for every
finite Hermitian family. Let $P_a\ge\mbb{0}$ be any finite family for
which there exists a POVM $\{Q_a\}_a$ satisfying
$P_aQ_a=\mbb{0}$ for every $a$, and define
\begin{align}
Y_a:=-\mc N^{-1}(P_a).
\end{align}
Then $\mc N(Y_a)=-P_a$, so $Z=\mbb{0}$ is feasible in the dual SDP. Furthermore, every feasible $Z$ satisfies
\begin{align}
\tr Z=\sum_a\tr(ZQ_a)\ge-\sum_a\tr(P_aQ_a)=0.
\end{align}
Thus $Z=\mbb{0}$ is optimal. Applying
Eq.~\eqref{eq:dual-charac} to this witness family gives
\begin{align}
\int_\Lambda
\max_a
\tr \left[
-\mc N^{-1}(P_a)\Pi_\lambda
\right]
d\nu(\lambda)
\ge0,
\end{align}
which proves Eq.~\eqref{eq:orthogonality-normal-form}.

\section{Proofs for the main results}
\label{app:main-results}
\subsection{Singular channels}
\subsubsection{Singular qubit unital channels at the boundary}
\label{app:singular-boundary}

We next justify the direct treatment of singular channels at the boundary. Up to input and output unitary rotations, $D$ has diagonal form
$\text{diag}(\eta_1,\eta_2,0)$. The Fujiwara--Algoet complete-positivity
conditions~\cite{Fujiwara1999} give
\begin{align}
|\eta_1+\eta_2|\le1,\quad|\eta_1-\eta_2|\le1,
\end{align}
and hence
\begin{align}
|\eta_1|+|\eta_2|=\max\left\{|\eta_1+\eta_2|,
|\eta_1-\eta_2|\right\}\le1.
\label{eq:singular-cp-bound}
\end{align}
Rotational invariance then gives
\begin{align}
&2\int_{S^2}\norm{D\hat n}d\mu(\hat n)=2\int_{S^2}\sqrt{\eta_1^2n_1^2+\eta_2^2n_2^2} d\mu(\hat n)
\notag\\
&\le2\int_{S^2}\left(|\eta_1n_1|+|\eta_2n_2|\right)d\mu(\hat n)=|\eta_1|+|\eta_2|\le1.
\label{eq:singular-average-bound}
\end{align}
Since $\sqrt{\eta_1^2n_1^2+\eta_2^2n_2^2}<|\eta_1n_1|+|\eta_2n_2|$ whenever $\eta_1\eta_2n_1n_2\ne0$,  the equality at the boundary holds only if $|\eta_1|=1$ and $\eta_2=0$. Therefore $D$ has singular values $(1,0,0)$. There must exist $\hat u,\hat v\in S^2$ such that
\begin{align}
D=\hat u\hat v^T.
\end{align}
For an arbitrary qubit effect
$M_a=x_{0,a}\mbb{1}+\vec x_a\cdot\vec\sigma$, one has
\begin{align}
\mc N_D^\dagger(M_a)&=x_{0,a}\mbb{1}+(\hat v\cdot\vec x_a)\hat u\cdot\vec\sigma,\\
&=[x_{0,a}+(\hat v\cdot\vec x_a)]P_++[x_{0,a}-(\hat v\cdot\vec x_a)]P_-\notag
\end{align}
which can be simulated by the parent $\{P_{\pm}=\frac{1}{2}(\mbb1 \pm \hat{u}\cdot\vec \sigma)\}$ since $x_{0,a}\pm (\hat v\cdot\vec x_a)$ are nonnegative. Therefore, $\mc N_D$ maps every set of measurements to a jointly measurable set, and is incompatibility-breaking.  
\subsubsection{Arbitrary Singular Qubit Channels and Closure}
\label{app:singular-general}
More generally, although Proposition~\ref{prop:orthogonality-normal-form} requires $\mc N$ to be invertible as a linear map, this assumption entails no loss for the incompatibility-breaking criteria proved below. To see
this, let $\mc N_{D,\vec t}$ be a qubit channel satisfying
\begin{align}
2\int_{S^2}\norm{D\hat n}d\mu(\hat n)+\norm{\vec t}\le1,
\end{align}
where the unital case is recovered by setting $\vec t=\vec0$. Let $\Delta_\eta$ be the partial depolarizing channel with Bloch matrix $\eta I$, where $0<\eta<1/2$, and define
\begin{align}
\mc N_\epsilon:=(1-\epsilon)\mc N_{D,\vec t}+\epsilon\Delta_\eta .
\end{align}
Since $\mc N_\epsilon$ is a convex combination of channels, it
is again a valid channel. The Pauli representations of $\mc N_\epsilon$ are
\begin{align}
D_\epsilon=(1-\epsilon)D+\epsilon\eta I,\qquad\vec t_\epsilon=(1-\epsilon)\vec t,
\end{align}
and they satisfy
\begin{align}
2\int_{S^2}\norm{D_\epsilon\hat n}d\mu(\hat n)+\norm{\vec t_\epsilon}\le1-\epsilon(1-2\eta)<1.
\end{align}
Since $\det[D_\epsilon]$ above is a polynomial in $\epsilon$ and is not identically zero, there exists a sequence $\epsilon_k\rightarrow0$ such that every
$D_{\epsilon_k}$ is invertible.

Let $\Pi_k$ be the parent in Eq.~\eqref{eq:nonunital-parent-main} constructed from
$(D_{\epsilon_k},\vec t_{\epsilon_k})$. Then
$\Pi_k(\hat n)\to \Pi_{D,\vec t}(\hat n)$ uniformly on $S^2$.

For every finite Hermitian family $\{Y_a\}_a$, the invertible case
gives
\begin{align}
&\sup_{\{M_a\}}\sum_a\tr\left[\mc N_{\epsilon_k}(Y_a)M_a\right]\notag\le\int_{S^2}\max_a\tr\left[Y_a\Pi_k(\hat n)\right]d\mu(\hat n).
\end{align}
The left-hand side converges because the POVM set is compact and the
objective depends continuously on the channel. The right-hand side
converges by the uniform convergence of $\Pi_k$. Since this holds for every finite Hermitian family $\{Y_a\}_a$, Eq.~\eqref{eq:dual-charac} implies that the explicit parent $\Pi_{D,\vec t}$ simulates every noisy POVM under $\mc N_{D,\vec t}^\dagger$.



This proves Theorem~\ref{thm:main} and Proposition~\ref{prop:nonunital-sufficient} for singular $D$ and, moreover, shows that the same explicit parent is retained in the limit.

\subsection{Steering--joint-measurability correspondence and steering criteria}
\label{app:state-channel-correspondence}

Consider a general two-qubit state
\begin{align}
\rho_{AB}=\frac{1}{4}\left[\mbb1\otimes\mbb1+\vec a\cdot\vec\sigma\otimes\mbb1+\mbb1\otimes\vec b\cdot\vec\sigma+\sum_{j,k=1}^3T_{jk}\sigma_j\otimes\sigma_k\right].
\end{align}
For steering from Alice to Bob, if $\rho_B$ is full rank, the
invertible local filter
\begin{align}
\widetilde\rho_{AB}=\left[\mbb1\otimes(2\rho_B)^{-1/2}\right]\rho_{AB}\left[\mbb1\otimes(2\rho_B)^{-1/2}\right]
\end{align}
preserves steerability and gives
$\widetilde\rho_B=\mbb1/2$~\cite{Sania2014}. Dropping tildes, the
filtered state has the canonical form in
Eq.~\eqref{eq:canonical-two-qubit-state}. If $\rho_B$ has rank one,
then $\rho_{AB}$ is a product state and is automatically unsteerable.

Let $\Theta:=\operatorname{diag}(1,-1,1)$. The steering–joint-measurability correspondence~\cite{Uola2015,Kiukas2017} associates the canonical
state with the qubit channel $\mc N_{D,\vec t}$ defined by
\begin{align}
\vec t=\Theta\vec a,
\qquad
D=T^T\Theta .
\end{align}
Indeed, for every qubit effect
$M=x_0\mbb1+\vec x\cdot\vec\sigma$, direct calculation gives
\begin{align}
\tr_A\left[(M\otimes\mbb1)\rho_{AB}\right]=\frac{1}{2}\mc N_{D,\vec t}^\dagger(M^T).
\label{eq:state-channel-correspondence}
\end{align}
Since transposition is a bijection on the set of POVMs, an LHS model
for the resulting assemblage is equivalent to a parent POVM for
$\{\mc N_{D,\vec t}^\dagger(M_{a|x}^T)\}_{a,x}$: one simply identifies the effects of the parent POVM as $\Pi_\lambda=2\rho_\lambda$, normalization then follows from $\int_\Lambda\rho_\lambda d\nu(\lambda)=\mbb1/2$.

Consequently, the canonical state is unsteerable under arbitrary POVMs if and only if $\mc N_{D,\vec t}$ is
incompatibility breaking. Since $\Theta$ is orthogonal and $D$ and
$T$ have the same singular values,
Proposition~\ref{prop:nonunital-sufficient} gives
Eq.~\eqref{eq:general-two-qubit-unsteerability}. When $\vec a=\vec0$,
the associated channel is unital, so Theorem~\ref{thm:main} makes the
condition necessary and sufficient. This proves
Theorem~\ref{thm:two-qubit-steering}.

\subsubsection{Nontrivial unsteerable two-qubit state}
\label{app:nontrivial-nonunital-example}

Let $\mc U_0$ denote the set of physical canonical states with
$\vec a=\vec0$ satisfying
\begin{align}
m(T):=2\int_{S^2}\norm{T\hat n}d\mu(\hat n)\le1,
\end{align}
and let $\mathrm{Sep}$ denote the set of separable two-qubit states.
Consider the Hermitian witness
\begin{align}
H=&-\mbb1\otimes\mbb1+\frac{39}{200}\sigma_z\otimes\mbb1-\frac9{50}\mbb1\otimes\sigma_z\notag\\&+\frac{49}{50}
\left(\sigma_x\otimes\sigma_x+\sigma_y\otimes\sigma_y\right)+\frac{23}{25}\sigma_z\otimes\sigma_z .
\label{eq:nonunital-separating-witness}
\end{align}
For the state in the main text,
\begin{align}
\tr\left(H\rho^{AB}\right)=\frac1{2000}>0.
\label{eq:nonunital-target-separation}
\end{align}

We first consider $\tau\in\mc U_0$, with correlation matrix $T$.
Positivity of $\tau$, evaluated on the singlet state, gives
\begin{align}
T_{xx}+T_{yy}+T_{zz}\le1.
\end{align}
Moreover,
\begin{align}
\frac{2T_{xx}+2T_{yy}-T_{zz}}{3}
\le m(T)\le1,
\end{align}
where the inequality follows from $\norm{T\hat n}\ge \hat{w}(\hat n)\cdot T\hat n$ with $\hat{w}(\hat n):=\left(\frac{2}{3}\text{sign}(n_x),\frac{2}{3}\text{sign}(n_y),-\frac{1}{3}\text{sign}(n_z)\right)$.

Consequently,
\begin{align}
\tr(H\tau)=&-1+\frac3{50}\left(\frac{2T_{xx}+2T_{yy}-T_{zz}}{3}\right)\notag\\
&+\frac{47}{50}\left(T_{xx}+T_{yy}+T_{zz}\right)\le0.
\end{align}
For the separable set, convexity reduces the optimization to pure product states
\begin{align}
\rho_{\hat u}\otimes\rho_{\hat v}
=
\frac12\left(\mbb1+\hat u\cdot\vec\sigma\right)
\otimes
\frac12\left(\mbb1+\hat v\cdot\vec\sigma\right).
\end{align}
For fixed $\hat u$, direct calculation gives
\begin{align}
\tr\left(H\rho_{\hat u}\otimes\rho_{\hat v}\right)
={}&-1+\frac{39}{200}u_z
+\frac{49}{50}\left(u_xv_x+u_yv_y\right)\notag\\
&+\left(-\frac9{50}+\frac{23}{25}u_z\right)v_z.
\label{eq:product-witness-expectation}
\end{align}
The terms depending on $\hat v$ are maximized when $\hat v\propto\left(\frac{49}{50}u_x,\frac{49}{50}u_y,-\frac9{50}+\frac{23}{25}u_z
\right)$. Now writing $s:=u_z\in[-1,1]$ and using
$u_x^2+u_y^2=1-s^2$, we obtain
\begin{align}
&\tr\left(H\rho_{\hat u}\otimes\rho_{\hat v}\right)
\le-1+\frac{39}{200}s\notag\\
&+\sqrt{\left(\frac{49}{50}\right)^2(1-s^2)+\left(-\frac9{50}+\frac{23}{25}s\right)^2}.
\label{eq:product-witness-one-variable}
\end{align}
A direct calculation shows that the right-hand side is strictly negative for all $s\in[-1,1]$, hence,
\begin{align}
\tr(H\sigma)\le0\qquad\forall\sigma\in\conv\left(\mc U_0\cup\mathrm{Sep}\right),
\end{align}
whereas Eq.~\eqref{eq:nonunital-target-separation} is positive. Hence
\begin{align}
\rho^{AB}\notin\conv\left(\mc U_0\cup\mathrm{Sep}\right).
\end{align}

It is worth noting another complementary way to enlarge known unsteerable regions through positive maps on the trusted party. If $\tau^{AB}$ is unsteerable from Alice to Bob and $\Gamma$ is a positive linear map such that
\begin{align}
\rho^{AB}=\frac{(\mbb I\otimes\Gamma)(\tau^{AB})}
{\tr[(\mbb I\otimes\Gamma)(\tau^{AB})]}
\end{align}
is a valid quantum state, then $\rho^{AB}$ is also unsteerable from Alice to Bob~\cite{Bowles2016}; a related positive-map construction was used in Ref.~\cite{Aguilar2024}. In fact, the state considered above admits an alternative certification of this form, with $\tau^{AB}\in\conv(\mc U_0\cup\mathrm{Sep})$ and a suitable positive map $\Gamma$. Thus, although $\rho^{AB}\notin\conv(\mc U_0\cup\mathrm{Sep})$, it belongs to the positive-map closure of this set.

\subsection{Proof of Lemma~\ref{lem:weighted-ellipsoidal-average}}
\label{app:weighted-ellipsoidal-average}

\begin{proof}
For every nonempty $J\subseteq\mc I$, define
\begin{align}
F_J:=\int_{S^2}\max_{i\in J}\alpha_i\left(\hat u_i\cdot\hat n-h(\hat n)\right)d\mu(\hat n).
\end{align}
Since $J\subseteq\mc I$, we have $F_{\mc I}\ge F_J$. By Carath\'eodory's theorem, there exists $J\subseteq\mc I$, with $|J|\le4$, such that
\begin{align}
\vec0\in\conv\{\hat u_i:i\in J\}.
\label{eq:reduced-balanced-subset}
\end{align}
Since every $\hat u_i$ is a unit vector, necessarily $|J|\ge2$. It is therefore sufficient to consider the cases $|J|=2,3,4$.

Suppose first that $|J|=2$. Since $ \vec0\in\conv\{\hat u_i:i\in J\}$, we can write them as $\hat u$ and $-\hat u$, and define
\begin{align}
\delta:=\frac{\alpha_2-\alpha_1}{\alpha_1+\alpha_2}.
\end{align}
Using $\max\{A,B\}={(A+B+|A-B|)}/{2}$, a direct calculation gives
\begin{align}
F_{\{1,2\}}&=\frac{\alpha_1+\alpha_2}{2}\left[\int_{S^2}\left|\hat u\cdot\hat n+\delta h(\hat n)\right|
d\mu(\hat n)-\frac{1}{2}\right]\notag \\
&\ge \frac{\alpha_1+\alpha_2}{2}\left[\int_{S^2}\left|\hat u\cdot\hat n\right|
d\mu(\hat n)-\frac{1}{2}\right]\ge 0,
\end{align}
where the first inequality follows from the fact that both $h(\hat n)$, $d\mu(\hat{n})$ are even (i.e., invariant under $\hat n\mapsto-\hat n$) and $|x+a|+|x-a|\ge 2|x|$. 

Suppose next that $|J|=3$. Relabel the indices so that
\begin{align}
\alpha_1\le \alpha_2\le \alpha_3,
\end{align}
Then there exists $t\in[0,1]$ such that
\begin{align}
\alpha_2=(1-t)\alpha_1+t\alpha_3.
\label{eq:middle-weight}
\end{align}
Since both $h(\hat n)$ and $d\mu(\hat{n})$ are even, 
\begin{align}
F_J=&\int_{S^2}\max_{i\in J}\alpha_i\left(\hat u_i\cdot\hat n-h(\hat n)\right)d\mu(\hat n)\notag\\ 
\overset{\hat n\rightarrow -\hat n}{=}&\int_{S^2}\max_{i\in J}\alpha_i\left(-\hat u_i\cdot\hat n-h(\hat n)\right)d\mu(\hat n)
\end{align}
Averaging the two representations of $F_{J}$ gives
\begin{align}
&2F_{J}=\int_{S^2}\max_{i,j\in J}A_{ij}(\hat n)d\mu(\hat n), \label{eq:three-antipodal} \\
\text{where }&A_{ij}(\hat n):=(\alpha_i\hat u_i-\alpha_j\hat u_j)\cdot\hat n-(\alpha_i+\alpha_j)h(\hat n).\notag
\end{align}
The symmetrization pairs $A_{ij}$ with $A_{ji}$, whose linear parts have opposite signs while their coefficients of $h(\hat n)$ are identical. In the following, we use two pointwise lower bounds of Eq.~\eqref{eq:three-antipodal}. 

For the first, a maximum dominates every convex combination of its entries, so pointwise
\begin{align}
\max_{i,j}A_{ij}(\hat n)
\ge&\max\Bigl\{A_{31}(\hat n),tA_{12}(\hat n)+(1-t)A_{32}(\hat n),\notag \\
&A_{13}(\hat n),tA_{21}(\hat n)+(1-t)A_{23}(\hat n)
\Bigr\}.
\label{eq:selected-pair-branches}
\end{align}
Define
\begin{subequations}
\begin{align}
\vec v&:=\alpha_3\hat u_3-\alpha_1\hat u_1,\\
\vec w&:=t(\alpha_1\hat u_1-\alpha_2\hat u_2)+(1-t)(\alpha_3\hat u_3-\alpha_2\hat u_2).
\end{align}
\label{eq:def vw}
\end{subequations}
The four expressions in
Eq.~\eqref{eq:selected-pair-branches} reduce to
\begin{align}
\pm \vec v\cdot\hat n-(\alpha_1+\alpha_3)h(\hat n),~\pm \vec w\cdot\hat n-(\alpha_1+\alpha_3)h(\hat n).
\end{align}
The common $h(\hat n)$-term can therefore be separated from the maximization.

Using $\max\{|a|,|b|\}=\frac{|a+b|+|a-b|}{2}$ and Eq.~\eqref{eq:selected-pair-branches},
\begin{align}
&\max_{i,j}A_{ij}(\hat n)\ge\max\left\{|\vec v\cdot\hat n|,|\vec w\cdot\hat n|\right\}-(\alpha_1+\alpha_3)h(\hat n)\notag \\
&=\frac{|\vec v\cdot\hat n+\vec w\cdot\hat n|+|\vec v\cdot\hat n-\vec w\cdot\hat n|}{2}-(\alpha_1+\alpha_3)h(\hat n)
\label{eq:common-penalty-envelope}
\end{align}
Using $\int_{S^2}|\vec z\cdot\hat n|d\mu(\hat n)=\frac{\norm{\vec z}}{2}$, Eqs.~\eqref{eq:three-antipodal},
\eqref{eq:common-penalty-envelope} and the assumption that $\int_{S^2}h({\hat n})d\mu(\hat n)=\frac{1}{2}$ give
\begin{align}
2F_{\{1,2,3\}}\ge\frac{\norm{\vec v-\vec w}+\norm{\vec v+\vec w}}{4}-\frac{\alpha_1+\alpha_3}{2}.
\label{eq:three-width-bound}
\end{align}

For the second relaxation, we bound
\begin{align}
&\norm{\vec v-\vec w}+\norm{\vec v+\vec w}=\max_{\substack{\norm{x}\le1\\\norm{y}\le1}}
\left[x\cdot(\vec v-\vec w)+y\cdot(\vec v+\vec w)\right].\notag \\
&\ge-\hat u_1\cdot(\vec v-\vec w)+\hat u_3\cdot(\vec v+\vec w)\notag\\
&=2(\alpha_1+\alpha_3)-\alpha_2\left(1+\hat u_1\cdot\hat u_2+\hat u_2\cdot\hat u_3+\hat u_3\cdot\hat u_1\right)\notag\\
&=2(\alpha_1+\alpha_3)+\frac{\alpha_2}{2}\left[1-|\hat u_1+\hat u_2+\hat u_3|^{2}\right].
\end{align}
where the second equality uses Eqs.~\eqref{eq:def vw} and \eqref{eq:middle-weight}.

Eq.~\eqref{eq:three-width-bound} yields
\begin{align}
F_{\{1,2,3\}}\ge\frac{\alpha_2}{16}\left[1-\norm{\hat u_1+\hat u_2+\hat u_3}^{2}
\right].
\label{eq:three-quantitative}
\end{align}
Since $\vec0\in\conv\{\hat u_1,\hat u_2,\hat u_3\}$, one can find $\beta_i\ge 0$ such that $\sum_{i=1}^3\beta_i=1$ and $\sum_{i=1}^3\beta_i\hat u_i=\vec0$. 

The triangle inequality gives that $\beta_i\le \sum_{j\ne i}\beta_j$, therefore, $\beta_i\le 1/2$ for all $i$, and one can then define
\begin{align}
q_i:=1-2\beta_i\ge0,
\qquad
\sum_{i=1}^3q_i=1.
\end{align}
such that $\sum_iq_i\hat u_i=\sum_i\hat u_i-2\sum_i\beta_i\hat u_i=\sum_i\hat u_i$, hence
\begin{align}
\norm{\hat u_1+\hat u_2+\hat u_3}=\norm{\sum_{i=1}^3q_i\hat u_i}\le 1.
\end{align}
Eq.~\eqref{eq:three-quantitative} therefore gives
\begin{align}
F_{\{1,2,3\}}\ge0.
\end{align}

Eq.~\eqref{eq:three-quantitative} was derived for arbitrary triples of unit vectors; the convex-hull condition was used only afterward to establish the norm bound.

Finally, suppose that $|J|=4$. Since $\vec0\in\conv\{\hat u_1,\hat u_2,\hat u_3,\hat u_4\}$, choose $\beta_i\ge0$ such that $\sum_{i=1}^4\beta_i=1$ and $\sum_{i=1}^4\beta_i\hat u_i=\vec0$.

The fact that $\norm{\beta_i\hat{u}_i}\le \sum_{j\ne i}\norm{\beta_j\hat u_j}$ implies that $\beta_i\le \sum_{j\ne i}\beta_j$, therefore, $\beta_i\le 1/2$ for all $i$, and define
\begin{align}
q_i:=1-2\beta_i\ge0,
\qquad
\sum_{i=1}^4q_i=2.
\end{align}
Now set
\begin{align}
\vec s:=\sum_{i=1}^4\hat u_i=\sum_{i=1}^4q_i\hat u_i.
\end{align}
It follows that
\begin{align}
\sum_{i=1}^4q_i\norm{\vec s-\hat u_i}^{2}&=\left(\sum_{i=1}^4q_i\right)\left(\norm{\vec s}^{2}+1\right)-2\vec s\cdot\sum_{i=1}^4q_i\hat u_i\notag \\
&=2(\norm{\vec s}^2+1)-2\norm{\vec s}^2=2.
\end{align}
Since the weights $\sum_iq_i/2=1$, there must exist $k$ such that
\begin{align}
\norm{\vec s-\hat u_k}=\norm{\sum_{i\ne k}\hat u_i}\le1.
\end{align}
After relabeling the three indices in $J\setminus\{k\}$ according to their associated coefficients $\alpha_1\le \alpha_2\le \alpha_3$, Eq.~\eqref{eq:three-quantitative} applies
and gives
\begin{align}
F_{J\setminus\{k\}}\ge0.
\end{align}
Since $F_J\ge F_{J\setminus\{k\}}$, it follows that $F_J\ge0$ for all $|J|\le 4$.
Together with the Carath\'eodory reduction, this proves the lemma.

\end{proof}
\subsection{Proof of Proposition~\ref{prop:nonunital-sufficient}}
\label{app:nonunital-sufficient}

\begin{proof}
Define
\begin{subequations}
\begin{align}
m_D:&=\int_{S^2}\norm{D\hat n} d\mu(\hat n).\\
\delta_{D,\vec t}:&=\frac{1}{2}-m_D-\frac{\norm{\vec t}}{2}
\end{align}
\end{subequations}
Under the assumption in Eq.~\eqref{eq:nonunital-sufficient-condition}, $\delta_{D,\vec t}\ge 0$. We can then introduce the function
\begin{align}
h_{D,\vec t}(\hat n):=\norm{D\hat n}+\left|\vec t\cdot\hat n\right|+\delta_{D,\vec t}.
\label{eq:nonunital-even-function}
\end{align}
This function is nonnegative and even. Moreover, using $\int_{S^2}
\left|\vec t\cdot\hat n\right|d\mu(\hat n)=\frac{\norm{\vec t}}{2}$, we obtain
\begin{align}
\int_{S^2}h_{D,\vec t}(\hat n)d\mu(\hat n)=m_D+\frac{\norm{\vec t}}{2}+\delta_{D,\vec t}=\frac{1}{2}.
\label{eq:nonunital-h-normalization}
\end{align}

Consider the operator density
\begin{align}
\Pi_{D,\vec t}(\hat n)d\mu(\hat n):=2\left[\left(h_{D,\vec t}(\hat n)+\vec t\cdot\hat n\right)\mbb{1}+(D\hat n)\cdot\vec\sigma\right]d\mu(\hat n),
\label{eq:nonunital-parent}
\end{align}
we can verify that Eq.~\eqref{eq:nonunital-parent} defines a valid
POVM. Indeed, positivity of $\Pi_{D,\vec t}(\hat n)$ follows from
\begin{align}
&h_{D,\vec t}(\hat n)+\vec t\cdot\hat n=\norm{D\hat n}+\left|\vec t\cdot\hat n\right|+\vec t\cdot\hat n+\delta_{D,\vec t}\ge\norm{D\hat n} \notag \\
&\Rightarrow \left(h_{D,\vec t}(\hat n)+\vec t\cdot\hat n\right)\mbb{1}+(D\hat n)\cdot\vec\sigma\ge\mbb{0},
\label{eq:nonunital-parent-positivity}
\end{align}
 for every $\hat n\in S^2$, since a qubit operator $c_0\mbb{1}+\vec c\cdot\vec\sigma$ is positive if and only if $c_0\ge\norm{\vec c}$.

Normalization follows from
Eq.~\eqref{eq:nonunital-h-normalization}:
\begin{align}
\int_{S^2}\Pi_{D,\vec t}(\hat n)d\mu(\hat n)=2\left[\int_{S^2}h_{D,\vec t}(\hat n)d\mu(\hat n)\right]\mbb{1}=\mbb{1}.
\end{align}

It remains to verify
Proposition~\ref{prop:orthogonality-normal-form}. Assume first that
$D$ is invertible. As in the proof of Theorem~\ref{thm:main}, it is
enough to consider the active indices $\mc A=\{a:Q_a\ne\mbb{0}\}$.
The case $P_a=\mbb{0}$ for some $a$ is immediate, while otherwise
positivity and $P_aQ_a=\mbb{0}$ imply, for all $a\in \mc A$
\begin{align}
P_a&=\alpha_a\left(\mbb{1}-\hat u_a\cdot\vec\sigma\right),
\qquad
Q_a=\beta_a\left(\mbb{1}+\hat u_a\cdot\vec\sigma\right),
\label{eq:nonunital-PQ-form}
\end{align}
for some $\alpha_a,\beta_a>0$ and $\hat u_a\in S^2$. Moreover,
$\sum_{a\in\mc A}Q_a=\mbb{1}$ implies
\begin{align}
\vec0\in\conv\{\hat u_a:a\in\mc A\}.
\label{eq:nonunital-balance}
\end{align}

The inverse map is given by $\mc N_{D,\vec t}^{-1}\left(y_0\mbb{1}+\vec y\cdot\vec\sigma\right)=y_0\mbb{1}+D^{-T}\left(\vec y-y_0\vec t\right)\cdot\vec\sigma$, and therefore
\begin{align}
\mc N_{D,\vec t}^{-1}(P_a)=\alpha_a\left[\mbb{1}-D^{-T}\left(\hat u_a+\vec t\right)\cdot\vec\sigma\right].
\end{align}
Combining this expression with
Eq.~\eqref{eq:nonunital-parent} gives
\begin{align}
\tr\left[-\mc N_{D,\vec t}^{-1}(P_a)\Pi_{D,\vec t}(\hat n)\right]=4\alpha_a\left[\hat u_a\cdot\hat n-h_{D,\vec t}(\hat n)\right],
\label{eq:nonunital-translation-cancellation}
\end{align}
where the contribution $\vec t\cdot\hat n$ cancels exactly.

Therefore,
\begin{align}
&\int_{S^2}\max_a\tr\left[-\mc N_{D,\vec t}^{-1}(P_a)\Pi_{D,\vec t}(\hat n)\right]d\mu(\hat n)\notag\\&\ge4\int_{S^2}\max_{a\in\mc A}\alpha_a\left[\hat u_a\cdot\hat n-h_{D,\vec t}(\hat n)\right]d\mu(\hat n)\ge0.
\end{align}
The last inequality follows from
Lemma~\ref{lem:weighted-ellipsoidal-average}, since
$h_{D,\vec t}$ is nonnegative and even, has spherical average $1/2$,
and Eq.~\eqref{eq:nonunital-balance} holds. Thus
Proposition~\ref{prop:orthogonality-normal-form} implies that
$\Pi_{D,\vec t}$ simulates every POVM in the image of
$\mc N_{D,\vec t}^{\dagger}$.

Finally, the result for singular $D$ follows by approximating
$\mc N_{D,\vec t}$ with invertible qubit channels as elaborated in Appendix~\ref{app:singular-general}.
\end{proof}
\subsection{Proof of Corollary~\ref{coro:haar-positive-form}}
\label{app:haar-positive-form}

For the depolarizing channel
\begin{align}
\Delta_r(X)=rX+(1-r)\frac{\tr X}{d}\mbb{1},
\end{align}
the inverse map is
\begin{align}
\Delta_r^{-1}(X)=\frac{1}{r}\left[X-\frac{1-r}{d}\tr(X)\mbb{1}
\right].
\end{align}
The Haar parent is
\begin{align}
\Pi_{\rm Haar}(d\psi)=d\op{\psi}{\psi}d\mu_H(\psi).
\end{align}
Applying Proposition~\ref{prop:orthogonality-normal-form} at
$r=r_d$ gives the condition
\begin{align}
\int\max_a
\tr\left[-\Delta_{r_d}^{-1}(P_a)d\op{\psi}{\psi}\right]
d\mu_H(\psi)\ge0.
\end{align}
Since
\begin{align}
\tr\left[-\Delta_{r_d}^{-1}(P_a)d\op{\psi}{\psi}\right]
=\frac{d}{r_d}\left[\frac{1-r_d}{d}\tr(P_a)-\langle\psi|P_a|\psi\rangle
\right],
\end{align}
and $d/r_d>0$, proposition~\ref{prop:orthogonality-normal-form} is therefore equivalent to
\begin{align}
\int\max_a\left[\frac{1-r_d}{d}\tr(P_a)-\langle\psi|P_a|\psi\rangle\right]d\mu_H(\psi)\ge0
\end{align}
for every finite family $P_a\ge\mbb{0}$ for which there exists a POVM
$\{Q_a\}_a$ satisfying $P_aQ_a=\mbb{0}$ for all $a$. This proves
Corollary~\ref{coro:haar-positive-form}.

\end{document}